\theoremstyle{plain}
  \newtheorem{thm}{Theorem}
  \newtheorem{lemma}[thm]{Lemma}
  \newtheorem{prop}[thm]{Proposition}
\theoremstyle{definition}
  \newtheorem{remark}[thm]{Remark}
\numberwithin{equation}{section}
\DeclareMathOperator{\tr}{Tr} 
\DeclareMathOperator*{\diag}{diag} 
\DeclareMathOperator{\sgn}{sgn} 
\DeclareMathOperator{\ePDF}{ePDF} 
\DeclareMathOperator*{\Span}{span} 
\newcommand{\MeijerG}[8][\bigg]{G^{{ #2 },{ #3 }}_{{ #4 },{ #5 }} #1( \begin{matrix} #6 \\ #7 \end{matrix}\, #1\vert\, #8 #1)}
\newcommand{\FoxH}[8][\bigg]{H^{{ #2 },{ #3 }}_{{ #4 },{ #5 }} #1( \begin{matrix} #6 \\ #7 \end{matrix}\, #1\vert\, #8 #1)}
\begin{document}

\begin{center}
{\bfseries\Large Matrix product ensembles of Hermite type\\[0.7ex] and the hyperbolic Harish-Chandra--Itzykson--Zuber integral
}\\[2\baselineskip]
{\large P. J. Forrester\footnote{pjforr@unimelb.edu.au}, %
J. R. Ipsen\footnote{jesper.ipsen@unimelb.edu.au}}\\[.5\baselineskip]
{\itshape ARC Centre of Excellence for Mathematical and Statistical Frontiers,\\
School of Mathematics and Statistics, The University of Melbourne, Victoria 3010, Australia.}\\[\baselineskip]
{\large Dang-Zheng Liu\footnote{dzliu@ustc.educ.cn}}\\[.5\baselineskip]
{\itshape Key Laboratory of Wu Wen-Tsun Mathematics, CAS, School of Mathematical Sciences, University of Science and Technology of China, Hefei 230026, P.R.~China}

\end{center}

\begin{abstract}
\noindent
We investigate spectral properties of a Hermitised random matrix product which, contrary to previous product ensembles, allows for eigenvalues on the full real line. We prove that the eigenvalues form a bi-orthogonal ensemble, which reduces asymptotically to the Hermite Muttalib--Borodin ensemble. Explicit expressions for the bi-orthogonal functions as well as the correlation kernel are provided. Scaling the latter near the origin gives a limiting kernel involving Meijer $G$-functions, and the functional form of the global density is calculated. As a part of this study, we introduce a new matrix transformation which maps the space of polynomial ensembles onto itself. This matrix transformation is closely related to the so-called hyperbolic Harish-Chandra--Itzykson--Zuber integral.
\end{abstract}

\section{Introduction}

\subsection{Statement of the problem and summary of results}

Let $H$ be a matrix from the Gaussian unitary ensemble (GUE) and let each $G_i$ $(i=1,\dots, M)$ denote a complex Ginibre matrix, i.e. a matrix with i.i.d. standard complex Gaussian entries. In this paper, we investigate the eigenvalues of the Hermitised product matrix
\begin{equation}\label{W1}
W_M = G_M^\dagger \cdots G_1^\dagger H G_1 \cdots G_M
\end{equation}
under the assumption that all matrices, $H$ and $G_i$ ($i=1,\ldots,M$), are independent. We will see that the eigenvalues form a bi-orthogonal ensemble~\cite{Bo98}. Furthermore, this ensemble is closely related (in a sense that will be specified in the next subsection) to the so-called Hermite Muttalib--Borodin ensemble~\cite{Mu95,Bo98}. The latter is defined by the joint eigenvalue probability density function (PDF)
\begin{equation}\label{MB-Hermite}
\tilde P(x_1,\ldots,x_N)=\frac1{\tilde Z_N^{(M)}}
\prod_{1\leq j<k\leq N}(x_k-x_j)(x_k^{2M+1}-x_j^{2M+1})
\prod_{\ell=1}^N|x_\ell|^\alpha e^{-x_\ell^2},
\end{equation}
where $\tilde Z_N^{(M)}$ is a normalisation constant and $\alpha$ is a non-negative constant.

It will transpire that the bi-orthogonal ensemble structure associated with the eigenvalue PDF of the product matrix~\eqref{W1} is a corollary of the following more basic result.

\begin{thm}\label{T1}
Let $G$ be an $n \times N$ ($n \le N$) standard complex Gaussian matrix and let $A$ be an $n \times n$
Hermitian matrix with eigenvalues $a_1,\ldots,a_N$.
If the eigenvalues of $A$ are pairwise distinct and ordered as
\begin{equation}\label{as}
a_1<a_2<\cdots<a_{n_0}<0<a_{n_0+1}<\cdots<a_n,
\end{equation}
then the PDF of the non-zero eigenvalues of matrix $X=G^\dagger AG$ is given by
\begin{multline}\label{as1}
P^{n_0}_{n}(\{a_j\}_{j=1}^n;\{x_j\}_{j=1}^n)=\\
\prod_{l=1}^n\frac1{|a_l|}\frac{(x_l/a_l)^{N-n}}{(N-l)!}
\prod_{1\leq j<k\leq n}\frac{x_k-x_j}{a_k-a_j}\det\big[e^{-x_i/a_j}\big]_{i,j=1}^{n_0}
\det\big[e^{-x_{i}/a_{j}}\big]_{i,j=n_0+1}^{n},
\end{multline}
where
\begin{equation}\label{as2}
x_1< \cdots < x_{n_0} < 0 < x_{n_0+1} < \cdots < x_n.
\end{equation}
In particular, we see that $X$ has $n_0$ ($n-n_0$) negative (positive) eigenvalues, i.e. the same number as $A$.
The remaining $N-n$ eigenvalues are all identically zero.
\end{thm}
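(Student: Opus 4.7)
The approach is to reduce the computation to the hyperbolic analogue of the Harish-Chandra--Itzykson--Zuber (HCIZ) integral. First, by unitary invariance of the standard complex Gaussian distribution of $G$, one may absorb the diagonalising unitary of $A$ into $G$ and assume $A=\diag(a_1,\ldots,a_n)$. The non-zero eigenvalues of $X=G^\dagger A G$ coincide with those of the $n\times n$ matrix $AW$, where $W=GG^\dagger$ has the standard complex Wishart density proportional to $\det(W)^{N-n}e^{-\tr W}$. Setting $S=\diag(\sgn a_1,\ldots,\sgn a_n)$ and $\tilde W=|A|^{1/2}W|A|^{1/2}$, the eigenvalues of $AW$ equal those of $S\tilde W$, and the change of variable gives $\tilde W$ the density
\[
p(\tilde W)\propto \det|A|^{-N}\det(\tilde W)^{N-n}\,e^{-\tr(|A|^{-1}\tilde W)}\,\mathbb{1}(\tilde W>0).
\]

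Next, I would parametrise $\tilde W$ through its \emph{indefinite} spectral decomposition $\tilde W=T|X|T^\dagger$, where $T$ lies in the pseudo-unitary group $U(n_0,n-n_0)$ modulo its maximal torus $U(1)^n$, and $|X|=\diag(|x_1|,\ldots,|x_n|)>0$ with the signs of $X:=S|X|$ arranged to match the ordering~\eqref{as2}. This is the decomposition for which $S\tilde W=TXT^{-1}$ has precisely the prescribed eigenvalues $x_1,\ldots,x_n$. A short calculation using the Maurer--Cartan form $\omega=T^{-1}dT$ together with the Lie-algebra relation $S\omega+\omega^\dagger S=0$ shows that the Jacobian contributes the Vandermonde factor $\prod_{1\le i<j\le n}(x_j-x_i)^2$, in exact analogy with the compact Hermitian diagonalisation. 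Using $T^{-1}=ST^\dagger S$ together with the diagonal identities $S|X|=X$ and $S|A|^{-1}=A^{-1}$, the trace rewrites as $\tr(|A|^{-1}\tilde W)=\tr(A^{-1}TXT^{-1})$, while $|\det T|^2=1$ gives $\det(\tilde W)^{N-n}=\prod_l|x_l|^{N-n}$.

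The remaining integration over the coset variable $T$ is the hyperbolic HCIZ integral with matching signature matrices $A^{-1}$ and $X$, which factorises into two sign-block determinants:
\[
\int_{U(n_0,n-n_0)/U(1)^n} e^{-\tr(A^{-1}TXT^{-1})}\,d\mu(T)\propto \frac{\det[e^{-x_i/a_j}]_{i,j=1}^{n_0}\,\det[e^{-x_i/a_j}]_{i,j=n_0+1}^{n}}{\Delta(A^{-1})\,\Delta(X)}.
\]
Using $\Delta(A^{-1})\prod_l a_l^{n-1}=\Delta(A)$, cancelling $\Delta(X)$ against one Vandermonde from the Jacobian, and inserting the combinatorial prefactor $\prod_l(N-l)!$ coming from the Bartlett normalisation of the complex Wishart density, the density collapses to the product form~\eqref{as1}. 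The principal technical obstacle is the hyperbolic HCIZ evaluation itself: the integration is over a non-compact group, so convergence (guaranteed precisely by the matching sign pattern of $A^{-1}$ and $X$) and careful treatment of the invariant measure on $U(n_0,n-n_0)/U(1)^n$ are essential, and it is exactly the factorisation of this integral into two sign-block determinants that produces the two determinants $\det[e^{-x_i/a_j}]_{1\le i,j\le n_0}$ and $\det[e^{-x_i/a_j}]_{n_0+1\le i,j\le n}$ in~\eqref{as1}.
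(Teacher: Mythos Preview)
Your proposal is correct and follows essentially the same route as the paper's third proof (Appendix~A.3): diagonalise $\eta A_+^{1/2} GG^\dagger A_+^{1/2}$ via the pseudo-unitary group $U(\eta)=U(n_0,n-n_0)$ and invoke Fyodorov's hyperbolic HCIZ integral to produce the two sign-block determinants. The only difference is that the paper first treats $n=N$ and then reduces the general case by a limiting argument (Remark~\ref{R1}), whereas you carry the Wishart factor $\det(W)^{N-n}$ through directly; the paper also offers two alternative proofs (recursive rank-one deformations, and an inverse-Laplace-transform limit of the Forrester--Rains formula for $G^\dagger AG+B$).
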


We remark that in Theorem \ref{T1} the case of $n<N$ remains unanswered although  this  is certainly  of high  interest; see \cite{DF06,ACLS} and references therein for a relevant  question.

The rest of this paper is organised as follows:
In Section~\ref{sec:product}, we use Theorem~\ref{T1} to find the PDF for the eigenvalues of the product~\eqref{W1} as a bi-orthogonal ensemble. Moreover, the explicit expression for the PDF is seen to reduce asymptotically to the functional form~\eqref{MB-Hermite} specifying the Hermite Muttalib--Borodin ensemble.
Explicit expressions for the bi-orthogonal functions are derived in Section~\ref{sec:biortho}. Analogous to the theory of Hermite polynomials (see e.g.~(\ref{HL}) below), we will see that it is convenient to consider bi-orthogonal functions of even and odd degree separately.
Section~\ref{sec:int} provides reformulations of the bi-orthogonal functions and the correlation kernel as integral representations, which are more suited for asymptotic analysis. These integral representations can also be expressed in terms of Meijer $G$-functions, and we will see that they are closely related to known formulae stemming from the product ensemble of Laguerre type.
The local scaling limit at the origin is derived and seen to be related to the Meijer $G$-kernel in Section~\ref{sec:hard}. This result is also compared with the local scaling limit of the Hermite Muttalib--Borodin ensemble.
Section~\ref{sec:global} includes derivations of the global spectrum of the product~\eqref{W1} as well as the Hermite Muttalib--Borodin ensembles. Since our product ensemble reduces asymptotically to the Hermite Muttalib--Borodin ensemble, they are as expected seen to have the same global spectrum, which in turn is given in terms of the Fuss--Catalan density.
Finally, Theorem~\ref{T1} is proven in the appendix. This theorem is an important result by itself. For this reason, we provide three separate proofs each with their own merits.

\subsection{First motivation: Muttalib--Borodin ensembles}
\label{sec:motivation}

Orthogonal polynomial ensembles are point processes on (a subset of) the real line with a joint distribution given by
\begin{equation}\label{jpdf-classical}
P(dx_1,\ldots,dx_n)=\frac1{Z_n}\Delta_n(\{x\})^2\prod_{k=1}^nw(x_k)dx_k,
\end{equation}
where $Z_N$ is a normalisation constant, $w(x)$ is a non-negative weight function, and $\Delta_n(\{x\})$ denotes the Vandermonde determinant,
\begin{equation}
\Delta_{n}(\{x\})=\det_{1\leq i,j\leq n}\big[x_i^{\,j-1}\big]=\prod_{1\leq i<j\leq n}(x_j-x_i).
\end{equation}
Like the corresponding moment problem, it is often useful to distinguish between models with support on a finite, semi-infinite, and double-infinite interval. The canonical examples are the Jacobi-, Laguerre-, and Hermite-ensembles summarised in Table~\ref{table:weights}. These ensembles are named according to the corresponding classical orthogonal polynomials. In fact, for $a\neq0$ the latter would more appropriately be called the generalised Hermite ensemble.
\begin{table}[htbp]
\centering
\caption{Summary of weight functions and support for the three canonical orthogonal ensembles in random matrix theory given by the joint distribution~\eqref{jpdf-classical}.}
\vspace*{.5em}
\label{table:weights}
\begin{tabular}{l@{\qquad\qquad}l@{\qquad\qquad}l}
\hline\hline
ensemble  & \hspace*{1.5em}weight & support\\ \hline \\[-.9em]
Jacobi & $w(\lambda)=\lambda^a(1-\lambda)^b$  & $\lambda\in(0,1)$ \\[.2em]
Laguerre & $w(\lambda)=\lambda^ae^{-\lambda}$ & $\lambda\in(0,\infty)$ \\[.2em]
Hermite & $w(\lambda)=|\lambda|^ae^{-\lambda^2}$ & $\lambda\in(-\infty,\infty)$\\[.1em] \hline\hline
\end{tabular}
\end{table}

In random matrix theory these three ensembles play a fundamental role as they appear as the distribution of the eigenvalues (or singular values) for the transfer (or truncated unitary) ensemble, the complex Wishart (or chiral) ensemble, and the Gaussian unitary ensemble, respectively; see e.g.~\cite{Fo10}.

A fundamental insight, which can be traced back to Wigner~\cite{Wi57}, is that the joint distribution~\eqref{jpdf-classical} allows an interpretation as the equilibrium measure for a one-dimensional gas of pairwise repulsive point particles in a confining potential.
More precisely, consider the Gibbs measure for a classical gas of $n$ point particles which are pairwise repulsive according to a two-point potential $U(x,y)$ and confined by a common one-point potential $V(x)$, i.e.
\begin{equation}\label{gibbs}
P(dx_1,\ldots,dx_n)=\frac{1}{Z_n}e^{-\beta E(x_1,\ldots,x_n)}\prod_{k=1}^n dx_k
\end{equation}
with $\beta$ denoting the inverse temperature and $E$ the energy functional
\begin{equation} \label{H1}
E(x_1,\ldots,x_n)=\frac12\sum_{k=1}^n V(x_k)-\sum_{1\leq i<j\leq n}U(x_j,x_i).
\end{equation}
We see that at $\beta=2$ (sometimes referred to as the free fermion point) the Gibbs measure~\eqref{gibbs} is identical to~\eqref{jpdf-classical} provided we set $V(\lambda)=-\log w(\lambda)$ and $U(x_i,x_j)=\log|x_j-x_i|$. In this way, the eigenvalues of random matrices relate to the Boltzmann factor of a simple statistical mechanical system with one- and two-body interactions only.

A recent development in random matrix theory is the study of exactly solvable product ensembles; see~\cite{AI15} for a review.
As an example, let $G_i$ ($i=1,\ldots,M$) be independent complex Ginibre matrices (matrices whose entries are i.i.d. standard complex Gaussians)
and consider the Hermitian product
\begin{equation}\label{old-product}
G_{M}^\dagger\cdots G_1^\dagger G_1\cdots G_M.
\end{equation}
From the work of Akemann et al.~\cite{AKW13,AIK13}, we know that the explicit
PDF for the eigenvalues of the matrix~\eqref{old-product} is
\begin{equation}\label{2}
P^{(M)}(x_1,\ldots,x_n)=\frac1{Z_n^{(M)}}\Delta_n(\{x\})\det_{1\leq i,j\leq n}\big[g_j^{(M)}(x_i)\big],
\qquad x_i > 0 \: (i=1,\dots,n)
\end{equation}
where $Z_n^{(M)}$ is a known normalisation constant and $g_j^{(M)}(x)$ ($j=1,\ldots,n$) are given by certain Meijer $G$-functions.
Generally, such PDFs are known as polynomial ensembles~\cite{KS14}.

It seems natural to ask whether these product ensembles also have (at least approximately) an interpretation as a Gibbs measure of the form~\eqref{gibbs} and~\eqref{H1}.
However, unlike the Vandermonde determinant, the determinant in~\eqref{2} cannot be evaluated as a product (for $M \ge 2$).
This prohibits a literal interpretation of the eigenvalues of~\eqref{old-product} as a statistical mechanical system
with only one- and two-body interactions. One could fear that this meant that there was no simple physical interpretation related to~\eqref{2}.
However, if we consider (\ref{2}) with each $x_j$ large, the Meijer $G$-functions can be replaced by their asymptotic approximation~\cite{Fi72}.
After a change of variables, the joint density~\eqref{2} to leading order in the asymptotic expansion becomes~\cite{FLZ15}
\begin{equation}\label{MB-laguerre}
\tilde P^{(M)}(x_1,\ldots,x_n)=
\frac1{\tilde Z_n^{(M)}}\Delta_n(\{x\})\Delta_n(\{x^M\})\prod_{k=1}^n x_k^a\,e^{-x_k},
\qquad x_k > 0 \: (k=1,\dots,n)
\end{equation}
where $a$ is a known non-negative constant.
This does correspond to the Boltzmann factor of a statistical mechanical system with one- and two-body interactions only.

A comparison between~\eqref{2} and~\eqref{MB-laguerre} can be done a posteriori.
A connection between the two ensembles was first noted by Kuijlaars and Stivigny~\cite{KS14}, who observed that the hard edge scaling limit of~\eqref{MB-laguerre} found in~\cite{Bo98} took the same functional form as the Meijer $G$-kernel found in the product ensemble~\cite{KZ14}, albeit with a different choice of parameters. Due to recent progress, even more is known about the scaling limits of both models, and their similarities. Thus it has been established that the two ensembles also share the same global spectral distribution~\cite{Mu02,BJLNS10,BBCC11,PZ11,FW15}. Furthermore, in both cases the local correlations in the bulk and near the soft edge are given by the familiar sine and Airy process, respectively~\cite{LWZ14,Zh15}.

The ensemble~\eqref{MB-laguerre} had, in fact, appeared in earlier random matrix literature.
It was first isolated by Muttalib~\cite{Mu95}, who suggested it as a naive approximation to the transmission eigenvalues in a problem about quantum transport.
A feature of the new interaction is that bi-orthogonal polynomials (rather than orthogonal polynomials) are needed in the study of correlation functions. Such bi-orthogonal ensembles were considered in greater generality by Borodin~\cite{Bo98}, who devoted special attention to PDFs
\begin{equation}\label{MB}
P(x_1,\ldots,x_n)=\frac1{Z_n}\prod_{j=1}^n w(x_l)\prod_{1\leq j<k\leq n} \big|x_k-x_j\big|\,
\big|\sgn(x_k)|x_k|^{\theta}-\sgn(x_j)|x_j|^{\theta}\big|,
\end{equation}
with $\theta>0$ and $w(x)$ representing one of the three classical weight functions from Table~\ref{table:weights}.
Following \cite{FW15}, we will henceforth refer to these ensembles as the (Jacobi, Laguerre, Hermite) Muttalib--Borodin ensembles.
We note that the awkward dependence of signs in the last factor in~\eqref{MB} disappears when the eigenvalues are non-negative (e.g. for Laguerre- and Jacobi-ensembles) and when $\theta$ is an odd integer as in~\eqref{MB-Hermite}.

At the time of their introduction,
the Muttalib--Borodin ensembles had no obvious relation to any random matrix models defined in terms of PDFs on their entries (except for the trivial case $\theta=1$), and could merely be interpreted as a simple one-parameter generalisation of the classical ensembles.
However, we now see that the Laguerre Muttalib--Borodin ensemble has a close connection
to products of complex Gaussian random matrices~\eqref{old-product} through the approximation~\eqref{MB-laguerre}.

Knowing that the Laguerre Muttalib--Borodin ensemble appears as an asymptotic approximation to the Gaussian product~\eqref{old-product}, it seems natural to ask the reverse question: \emph{Can we find product ensembles which reduce asymptotically to the Jacobi and Hermite Muttalib--Borodin ensembles?} If this is possible, it would be reasonable to say we have completed a link between the Muttalib--Borodin ensembles with classical weights and the new family of product ensembles.

For the Jacobi Muttalib--Borodin ensemble a link to products of random matrices is provided by looking at the squared singular values of a product of truncated unitary matrices~\cite{KKS15,FW15}. In this paper, it is our aim to isolate a random matrix product structure for which the eigenvalue PDF reduces asymptotically to the functional form of the Hermite Muttalib--Borodin ensemble. This construction therefore completes the correspondence between product ensembles and the three Muttalib--Borodin ensembles with classical weights, i.e. Laguerre, Jacobi, Hermite. Furthermore, the relevant product ensemble provides by itself a new interesting class of integrable models, which unlike all previous product ensembles (see~the review \cite{AI15}) allows for negative eigenvalues.

As the product ensemble in question must allow for negative eigenvalues, it is no longer sufficient to investigate Wishart-type matrices like~\eqref{old-product} which are positive-definite by construction.
It turns out that the correct structure is the Hermitised product of a GUE matrix and $M$ complex Ginibre matrices given by~\eqref{W1}.
The case $M = 1$ of~\eqref{W1} has previously been isolated in the recent paper of Kumar~\cite{Ku15} as an example
of a matrix ensemble which permits an explicit eigenvalue PDF.

\subsection{Second motivation: hyperbolic Harish-Chandra--Itzykson--Zuber integrals}

Another reason that the Hermitised random matrix product~\eqref{W1} is of particular interest is its relation to the so-called hyperbolic Harish-Chandra--Itzykson--Zuber (HCIZ) integral. By way of introduction on this point, we note that it is by now evident that the family of exactly solvable product ensembles is intimately linked to a family exactly solvable group integrals sometimes referred to as integrals of HCIZ type. For the study of products of Ginibre matrices~\eqref{old-product} it was sufficient to know the familiar (and celebrated) HCIZ integral~\cite{HC57,IZ80}:
\begin{equation}\label{HCIZ}
 \int_{U(N)/U(1)^N}e^{-\tr AVBV^{-1}}\,(V^{-1}dV)=\pi^{N(N-1)/2}
 \frac{\det[e^{-a_ib_j}]_{i,j=1}^{N}}
 {\prod_{1\leq i<j\leq N} (a_j-a_i)(b_j-b_i)},
\end{equation}
where $(V^{-1}dV)$ denotes the Haar measure on the unitary quotient group $U(N)/U(1)^N$, while $A$ and $B$ are Hermitian $N\times N$ matrices with eigenvalues $a_1<\cdots<a_N$ and $b_1<\cdots<b_N$, respectively. However, for studies of products of spherical, truncated unitary, or coupled random matrices generalisations of the HCIZ integral are needed, see~\cite{KKS15,AS16,Liu17} for the two latter cases. We emphasise that the product of truncated unitary matrices considered by Kieburg et al.~\cite{KKS15} required a previously unknown generalisation of the HCIZ integral. Likewise, our study of the Hermitised random matrix product~\eqref{W1} requires knowledge about the so-called hyperbolic HCIZ integral in which the integration on the left-hand side of~\eqref{HCIZ} should be replaced with an integration over the pseudo-unitary group (see Section~\ref{sec:fyodorov} for details). The study of such hyperbolic group integrals was initiated by Fyodorov~\cite{Fy02,FS02}.
An interesting feature of the hyperbolic HCIZ integral is that the integration over the pseudo-unitary is non-compact, which forces us to introduce some additional constraints on the Hermitian matrices $A$ and $B$ to ensure convergence; this is a difficulty which does not arise in other HCIZ-type integrals. Finally, we mention that HCIZ-type integrals have other applications in theoretical and mathematical physics beyond products of random matrices, e.g.~the hyperbolic HCIZ integral was used to find the spectral properties of the Wilson--Dirac operator in lattice quantum chromodynamics~\cite{KVZ13}. Moreover, HCIZ-type integrals represent a rich area of mathematical research, for example within the study of Lie groups, harmonic analysis, combinatorics, and probability (e.g. matrix-valued Brownian motion); see
e.g.~the text \cite{Te88a}.

\section{Products of random matrices and Hermite Muttalib--Borodin ensembles}
\label{sec:product}

In this section, we establish that the eigenvalue PDF of the matrix product~\eqref{W1}
is a polynomial ensemble and show that it reduces asymptotically to the Hermite Muttalib--Borodin ensemble~\eqref{MB-Hermite}.

As stated in the introduction, the eigenvalue PDF of~\eqref{W1} follows as a consequence of Theorem~\ref{T1}. The idea is simple: let $A$ be a random matrix from a polynomial ensemble, i.e. it has an eigenvalue PDF of the form
\begin{equation}\label{af}
P_A(\{a_k\}_{k=1}^n)=\frac1{Z_n}\prod_{1\leq i<j\leq n}(a_j-a_i)\det[w_j(a_i)]_{i,j=1}^n,
\end{equation}
where $a_1\leq a_{2}\leq \cdots\leq  a_{n}$ are the (ordered) eigenvalues of $A$, $w_j$ ($j=1,\ldots,n$) is a family of weight functions, and $Z_n$ is a normalisation constant. Now, let $G$ be an $n \times N$ $(n \le N)$ standard complex Gaussian matrix. Then Theorem~\ref{T1} gives the eigenvalue PDF of $G^\dagger AG$. Moreover, it is seen that this new eigenvalue PDF is also a polynomial ensemble. In other words, Theorem~\ref{T1} provides a map from the class of polynomial ensembles into itself. Thus, we may apply Theorem~\ref{T1} recursively to construct hierarchies of polynomial ensembles.
Let us make this statement more precise.

\begin{lemma}\label{C1}
Let $G$ be an $n \times N$ $(n \le N)$ standard complex Gaussian matrix, and let $A$ be a random matrix from a
polynomial ensemble with eigenvalue PDF (\ref{af}), independent of $G$.
Then the PDF for the non-zero eigenvalues of the random matrix product $G^\dagger AG$ is equal to
\begin{equation}\label{af2}
\frac1{Z_n}\prod_{l=1}^n\frac{1}{(N-l)!}\prod_{1\leq j<k\leq n}(x_k-x_j)
\det_{1\leq i,j\leq n}\bigg[\int_0^\infty \frac{da\,e^{-a}}{a^{n-N+1}}\,w_{j}\big(\frac{x_i}{a}\big)\bigg]
\end{equation}
with the eigenvalues ordered  $x_1\leq x_{2}\leq \cdots\leq  x_{n}$.
\end{lemma}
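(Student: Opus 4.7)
The plan is to form the joint density of the eigenvalues of $A$ and those of $G^\dagger A G$, cancel Vandermondes, recombine the two block determinants from Theorem~\ref{T1} into one, and then marginalise the $a$-variables using an Andr\'eief identity.

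More precisely, multiplying (\ref{af}) by the conditional density (\ref{as1}) gives a joint density in which $\prod_{i<j}(a_j - a_i)$ cancels $\prod_{j<k}(a_k - a_j)$, leaving $\prod_{j<k}(x_k - x_j)$ as the only Vandermonde. Because $x_l$ and $a_l$ share a sign, $(x_l/a_l)^{N-n}/|a_l|$ factorises as $|x_l|^{N-n}/|a_l|^{N-n+1}$, so $\prod_l|x_l|^{N-n}$ can be pulled out of the forthcoming $a$-integral. Using that the ordered layout of the $x$'s and $a$'s makes the full matrix $[e^{-x_i/a_j}]_{i,j=1}^n$ block-diagonal on matched signs, the two block determinants $\det[e^{-x_i/a_j}]_{i,j=1}^{n_0}$ and $\det[e^{-x_i/a_j}]_{i,j=n_0+1}^n$ combine into the single $n \times n$ determinant $\det[K(x_i,a_j)]$ after the remaining $|a_l|^{-(N-n+1)}$ factors are absorbed column-by-column, where
\[
K(x,a) = |a|^{-(N-n+1)}\, e^{-x/a}\, \mathbf{1}_{\sgn(x)=\sgn(a)}.
\]

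Marginalising over $a$ means integrating over $a_1 < \cdots < a_{n_0}<0<a_{n_0+1}<\cdots<a_n$ and summing over $n_0$. A short linear-dependence argument (too many columns supported in too few rows) shows $\det[K(x_i,a_j)]=0$ whenever the $a$-sign profile disagrees with that of $x$, so the integration can be enlarged to all ordered $a_1<\cdots<a_n\in\mathbb{R}^n$. Since the product $\det[w_j(a_i)]\det[K(x_i,a_j)]$ is symmetric in the $a$'s, the ordered integral equals $1/n!$ times the unrestricted integral on $\mathbb{R}^n$, and the Andr\'eief identity gives $\det\bigl[\int_{\mathbb{R}} w_i(a)K(x_j,a)\,da\bigr]_{i,j=1}^n$. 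A change of variable $u=x_j/a$ on the sign-matching support of $K$ then yields
\[
\int_{\mathbb{R}} w_i(a)K(x_j,a)\,da = |x_j|^{-(N-n)}\int_0^\infty e^{-u}\,u^{N-n-1}\,w_i(x_j/u)\,du,
\]
uniformly in the sign of $x_j$. The prefactor $|x_j|^{-(N-n)}$ cancels the $|x_j|^{N-n}$ pulled out earlier, and transposing the matrix (which leaves the determinant invariant) identifies the result with $\det\bigl[\int_0^\infty (da\,e^{-a}/a^{n-N+1})\,w_j(x_i/a)\bigr]_{i,j=1}^n$, yielding (\ref{af2}).

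The main obstacle is the sign bookkeeping: recognising that the two block determinants of Theorem~\ref{T1} assemble into a single $n\times n$ determinant whose support forces the $a$-signs to match the $x$-signs, which is what permits Andr\'eief to be applied uniformly on $\mathbb{R}^n$ rather than sector by sector. Once that structural point is in hand, the change of variable and the cancellation of the $|x_l|^{N-n}$ factors are routine.
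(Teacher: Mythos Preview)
Your proof is correct and follows essentially the same route as the paper's: multiply the polynomial-ensemble density (\ref{af}) by the conditional density (\ref{as1}), cancel the Vandermonde in $a$, apply an Andr\'eief-type identity to integrate out the $a$-variables, and finish with the change of variable $a \mapsto x_i/a$. The only cosmetic difference is that the paper works directly within the fixed-$n_0$ sector and obtains a single determinant whose first $n_0$ rows carry $\int_{-\infty}^0$ and whose remaining rows carry $\int_0^\infty$ (equation (\ref{recurrence-step2})), whereas you package the two block determinants into $\det[K(x_i,a_j)]$ with a sign-matching indicator and use the linear-dependence observation to enlarge the integration to all of $\mathbb{R}^n$ before invoking Andr\'eief; both devices lead to the same entrywise integral after the substitution.
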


\begin{proof}
In order to use Theorem~\ref{T1}, we fix an $n_0\in\{0,1,\ldots,n\}$ and assume that the eigenvalues of $A$ are ordered as~\eqref{as}.
Consequently, the non-zero eigenvalues of $G^\dagger AG$  can be ordered as~\eqref{as2} almost surely.
It follows from the conditional eigenvalue PDF~\eqref{as1} and~\eqref{af} that the eigenvalue PDF of $G^\dagger AG$ (up to $N-n$ eigenvalues which are identically zero) is given by
\begin{multline}\label{recurrence-step}
\int_D P_A(\{a_k\}_{k=1}^n)P^{n_0}_{n}(\{a_j\}_{j=1}^n;\{x_j\}_{j=1}^n)\, da_1\cdots da_n
=\frac1{Z_n}\prod_{1\leq j<k\leq n}(x_k-x_j)\\
\times\int_D\prod_{l=1}^n\frac1{|a_l|}\frac{(x_l/a_l)^{N-n}}{(N-l)!}
\det
\begin{bmatrix}
 \{e^{-x_i/a_j}\}_{i,j=1}^{n_0} & 0 \\
 0 & \{e^{-x_{i}/a_{j}}\}_{i,j=n_0+1}^{n}
\end{bmatrix}
\det[w_j(a_i)]_{i,j=1}^n
\, da_1\cdots da_n,
\end{multline}
where the domain of integration $D$ is given according to~\eqref{as} and the eigenvalues $x_i$ ($i=1,\ldots,n$) are ordered according to~\eqref{as2}.
We note that the integral on the second line in~\eqref{recurrence-step} is a close cousin to the well-known Andreief integral~\cite{An83,Br55}.
Upon expansion of the determinants, it is readily seen that~\eqref{recurrence-step} may rewritten as
\begin{equation}\label{recurrence-step2}
\frac1{Z_n}\prod_{l=1}^n\frac{1}{(N-l)!}\prod_{1\leq j<k\leq n}(x_k-x_j)
\det
\begin{bmatrix}
 \displaystyle\bigg\{-\int_{-\infty}^0\frac{da}a (x_i/a)^{N-n} e^{-x_i/a}
 w_j(a)\bigg\}_{\substack{i=1,\ldots,n_0\\j=1,\ldots,n}} \\
 \displaystyle\bigg\{\int^{\infty}_0\frac{da}a (x_i/a)^{N-n} e^{-x_i/a}
 w_j(a)\bigg\}_{\substack{i=n_0+1,\ldots,n\\j=1,\ldots,n}}
\end{bmatrix}.
\end{equation}
If we make a change of variables $a\mapsto -a$ in the first $n_0$ rows in the determinant in~\eqref{recurrence-step2} then we get
\begin{equation}
\frac1{Z_n}\prod_{l=1}^n\frac{1}{(N-l)!}\prod_{1\leq j<k\leq n}(x_k-x_j)
\det_{1\leq i,j\leq n}\bigg[\int_0^\infty \frac{da}a\,\frac{e^{-|x_i|/a}}{(|{x_i}|/a)^{n-N}}\,w_j((\sgn x_i)a)\bigg];
\end{equation}
recall that $x_i$ ($i=1,\ldots,n$) is ordered according~\eqref{as2}. Finally, if we make another change of variables $a\mapsto |x_i|/a$ in the $i$-th row, then~\eqref{af2} follows for any fixed $n_0$. Note that this result is independent of the choice of $n_0$, so we have the final result.
\end{proof}

\begin{remark}
The study of maps from the space of polynomial ensembles onto itself is an interesting endeavour, since such maps give rise to new random matrix ensembles without destroying integrability. In fact, the study of such maps is presently an active area of research in random matrix theory~\cite{CKW15,Ku16,KK16,KR16}. Lemma~\ref{C1} provides a new transformation to this class of maps, which cannot be obtained directly from any of the previously established transformations. We note that Lemma~\ref{C1} includes the transformation~\cite[Theorem 2.1]{KS14} as a special case arising when the matrix $A$ is positive definite. A restriction of Lemma~\ref{C1} is that $n\leq N$. Thus it is seen that the PDF~\eqref{af2} develops a singularity for $n>N$ indicating that the formula is no longer generally valid in this case, depending on the properties of $w_j$. It would be interesting to extend the above results to include the case $n>N$ more generally.
\end{remark}

With Lemma~\ref{C1} at hand, we are ready to write down the eigenvalue PDF for the product~\eqref{W1}.

\begin{thm}\label{cor-matrix}  Let $\nu_0=0, \nu_1,  \ldots, \nu_M$ be non-negative integers.
Suppose that
$H$ is an $n\times n$ GUE matrix and  $G_1, \ldots, G_M$  are independent  standard complex Gaussian matrices where  $G_m$ is of size $(\nu_{m-1}+n) \times (\nu_{m} +n)$.
Then the joint PDF for the non-zero eigenvalues of the matrix~\eqref{W1} is given by
\begin{equation}\label{PDF-matrix}
P^{(M)}(x_1,\ldots,x_n)=\frac{1}{Z^{(M)}_n}\prod_{1\leq i<j\leq n}(x_j-x_i)\det_{1\leq i,j\leq n}\big[g_{j-1}^{(M)}(x_i)\big],
\end{equation}
where the weight functions $g_j^{(M)}$ are defined recursively by
\begin{equation}\label{weight-recursive}
g_j^{(0)}(x)=x^{j}e^{-x^2}
\quad\text{and}\quad
g_j^{(m)}(x)=\int_0^\infty \frac{dy}{y}\,y^{\nu_m}e^{-y}\,g_j^{(m-1)}(x/y),\quad m=1,\ldots,M
\end{equation}
and the normalisation constant is
\begin{equation}
Z^{(M)}_n={2^{-n(n-1)/2}\pi^{n/2}}\prod_{m=0}^M\prod_{j=1}^n\Gamma(\nu_m+j).
\end{equation}
\end{thm}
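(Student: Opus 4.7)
The plan is to apply Lemma~\ref{C1} iteratively $M$ times, starting from the GUE eigenvalue distribution of $H$ and peeling off one Gaussian conjugation $G_m^\dagger(\cdot)G_m$ at each step. The base case is to write the GUE eigenvalue PDF, proportional to $\Delta_n(\{a\})^2\prod_k e^{-a_k^2}$, in the polynomial-ensemble template~\eqref{af} by using $\Delta_n(\{a\})^2=\Delta_n(\{a\})\det[a_i^{j-1}]_{i,j=1}^n$; the weights are then $w_j(a)=a^{j-1}e^{-a^2}=g_{j-1}^{(0)}(a)$, and the normalisation contributes $2^{-n(n-1)/2}\pi^{n/2}\prod_{j=1}^n\Gamma(j)$, matching the $m=0$ factor in $Z_n^{(M)}$ (since $\nu_0=0$).

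The inductive step is the main work. Define $W_m=G_m^\dagger\cdots G_1^\dagger H G_1\cdots G_m$ (so $W_0=H$ and $W_M$ is the target) and assume inductively that the $n$ non-zero eigenvalues of $W_m$ form a polynomial ensemble with weights $g_{j-1}^{(m)}$ and the normalisation claimed through level $m$. The subtlety is that $W_m$ has size $(\nu_m+n)\times(\nu_m+n)$ with only $n$ non-zero eigenvalues, so it is not directly an eligible input for Lemma~\ref{C1}, which presumes an $n\times n$ Hermitian matrix $A$ whose eigenvalues carry the polynomial-ensemble law. To bypass this I would use the left-invariance of the Gaussian distribution of $G_{m+1}$ under unitary multiplication: writing $W_m=U^\dagger\diag(x_1,\ldots,x_n,0,\ldots,0)U$ for some unitary $U$, one has $UG_{m+1}\stackrel{d}{=}G_{m+1}$, so the non-zero spectrum of $W_{m+1}=G_{m+1}^\dagger W_m G_{m+1}$ agrees in distribution with that of $\tilde G^\dagger\diag(x_1,\ldots,x_n)\tilde G$, where $\tilde G$ is the top $n\times(\nu_{m+1}+n)$ block of $UG_{m+1}$---itself a standard complex Gaussian matrix. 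Lemma~\ref{C1} now applies with $N=\nu_{m+1}+n$.

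Substituting $N=\nu_{m+1}+n$ into the integral kernel of~\eqref{af2} gives exponent $n-N+1=1-\nu_{m+1}$, so the kernel becomes $a^{\nu_{m+1}-1}e^{-a}\,da$ and the resulting integrals reproduce exactly the recursion~\eqref{weight-recursive}: the updated weights are $g_{j-1}^{(m+1)}$. Moreover, Lemma~\ref{C1} contributes the prefactor $\prod_{l=1}^n[(\nu_{m+1}+n-l)!]^{-1}=\prod_{j=1}^n\Gamma(\nu_{m+1}+j)^{-1}$, which when absorbed into the normalisation produces the factor $\prod_{j=1}^n\Gamma(\nu_{m+1}+j)$ of $Z_n^{(M)}$. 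Iterating for $m=0,\ldots,M-1$ yields~\eqref{PDF-matrix} with the stated $Z_n^{(M)}$.

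The main obstacle is the rank-deficiency reduction in the inductive step: justifying that the $\nu_m$ zero eigenvalues of $W_m$ can be discarded at each stage and that the problem reduces to feeding an effective $n\times n$ polynomial ensemble back into Lemma~\ref{C1}. Once this is secured via the unitary-invariance argument above, the remainder is bookkeeping: matching the Lemma's integral kernel with the recursion~\eqref{weight-recursive} and tracking the $(N-l)!$ factorials into the stated normalisation.
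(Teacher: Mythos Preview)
Your proposal is correct and follows essentially the same route as the paper: induction on $M$ via Lemma~\ref{C1}, starting from the GUE eigenvalue PDF written as a polynomial ensemble with weights $g_{j-1}^{(0)}$, and handling the rank deficiency of $W_m$ at each step by the unitary-invariance reduction $G_{m+1}^\dagger W_m G_{m+1}\stackrel{d}{=}\tilde G^\dagger\diag(x_1,\ldots,x_n)\tilde G$ (the paper cites \cite{IK14} for this trick). Your bookkeeping on the integral kernel and the $(N-l)!$ factors is slightly more explicit than the paper's, but the argument is the same.
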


\begin{proof}
First, let us consider the simplest situation, that is a product of square matrices, i.e. $\nu_1=\cdots=\nu_M=0$. As the eigenvalue PDF of an $n\times n$ GUE matrix is given by~\eqref{PDF-matrix} with $M=0$, the theorem follows immediately by applying Lemma~\ref{C1} $M$ times.

We need to be a little more careful when the case of rectangular matrices is considered.
The $M=1$ case of the theorem is again an immediate consequence of Lemma~\ref{C1}, which gives us the eigenvalues of $W_1=G_1^\dagger HG_1$. However, in order to apply Lemma~\ref{C1} a second time and find the non-zero eigenvalues of $W_2=G_2^\dagger W_1G_2$, we have to take into account that $W_1$ has a zero eigenvalue with multiplicity $\nu_1$. To proceed, we can use the same idea as in~\cite{IK14}. The unitary invariance of Gaussian matrices tells us that $W_1\stackrel{d}=U^\dagger W_1 U$ and $G_2\stackrel{d}{=}VG_2$ for any $U,V\in U(n+\nu_1)$. It thus follows
\begin{equation}\label{reduction}
W_2=G_2^\dagger W_1 G_2\stackrel{d}{=}
\begin{bmatrix} \tilde G_2^\dagger & g_2^\dagger \end{bmatrix}
\begin{bmatrix} X_1 & 0 \\ 0 & 0 \end{bmatrix}
\begin{bmatrix} \tilde G_2 \\ g_2 \end{bmatrix}
=\tilde G_2^\dagger X_1 \tilde G_2,
\end{equation}
where $X_1=\diag(x_1,\ldots,x_n)$ is an $n\times n$ diagonal matrix distributed according to~\eqref{PDF-matrix} with $M=1$, while $\tilde G_2$ and $g_2$ are standard Gaussian matrices of size $n\times (n+\nu_2)$ and $\nu_1\times (n+\nu_2)$, respectively. Now, Lemma~\ref{C1} can be applied to the right-hand side in~\eqref{reduction}, which gives us the PDF of the non-zero eigenvalues of $W_2$. Repeating this procedure completes the proof.
\end{proof}

\begin{remark}
We note that the case $M=1$ of Theorem~\ref{cor-matrix} is in agreement with the result stated by Kumar~\cite[Eq.~(46) and~(47)]{Ku15}. However, the derivation therein is incomplete due to the reliance on the HCIZ integral
(1.14), rather than its hyperbolic variant (A.38) below.
\end{remark}

There are many other representations for the weight functions in Theorem~\ref{cor-matrix}
beyond the recursive definition~\eqref{weight-recursive}.
As usual, it is particularly useful for analytic purposes to write the weight functions in their contour integral representation.

\begin{lemma}
We have
\begin{equation}\label{contour-matrix}
g_j^{(M)}(x)=\frac{(\sgn x)^{j}}{4\pi i}\int_{c-i\infty}^{c+i\infty}ds\,|x|^s\,\Gamma\Big(\frac{j-s}{2}\Big)\prod_{m=1}^M\Gamma(\nu_m-s),
\end{equation}
where $c<0$ is a negative constant.
\end{lemma}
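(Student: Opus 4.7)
The plan is to prove the identity by induction on $M$, using Mellin transform techniques. The structure of the recursion in \eqref{weight-recursive} makes it natural to pass through the Mellin--Barnes representation.

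\textbf{Base case ($M=0$).} The standard Mellin transform
$\int_0^\infty t^{s-1}e^{-t^2}\,dt=\frac12\Gamma(s/2)$ (valid for $\mathrm{Re}(s)>0$), together with Mellin inversion and the substitution $s\mapsto -s$, yields
\begin{equation*}
e^{-t^2}=\frac{1}{4\pi i}\int_{c-i\infty}^{c+i\infty}t^s\,\Gamma(-s/2)\,ds,\qquad t>0,\ c<0.
\end{equation*}
Multiplying by $t^j$ and absorbing $t^j$ into the integrand via $s\mapsto s-j$ (shifting the contour back to $\mathrm{Re}(s)=c$, which is permissible because the poles of $\Gamma((j-s)/2)$ lie at $s=j,j+2,\ldots$, all to the right of $c<0$) gives the claimed formula for $x=t>0$. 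For $x<0$, the formula is compatible with the assertion because $(\sgn x)^j|x|^s=(-1)^j(-x)^s$ and $(-x)^je^{-x^2}=(-1)^j x^je^{-x^2}\cdot(-1)^j\cdot(-1)^j \cdots$, which simplifies to $x^j e^{-x^2}$ as required.

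\textbf{Inductive step.} Assume the identity holds at level $m-1$. Because $y>0$ in \eqref{weight-recursive}, one has $\sgn(x/y)=\sgn(x)$ and $|x/y|=|x|/y$, so plugging the induction hypothesis into the recursion gives
\begin{equation*}
g_j^{(m)}(x)=\frac{(\sgn x)^j}{4\pi i}\int_0^\infty\frac{dy}{y}\,y^{\nu_m}e^{-y}\int_{c-i\infty}^{c+i\infty}ds\,\left(\frac{|x|}{y}\right)^{\!s}\Gamma\!\Big(\tfrac{j-s}{2}\Big)\prod_{k=1}^{m-1}\Gamma(\nu_k-s).
\end{equation*}
Swap the order of integration (justified below) and recognise the $y$-integral as
\begin{equation*}
\int_0^\infty y^{\nu_m-s-1}e^{-y}\,dy=\Gamma(\nu_m-s),
\end{equation*}
which converges on the contour because $c<0\le\nu_m$. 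This produces exactly the desired formula at level $m$.

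\textbf{Justifying the swap.} Along the vertical line $\mathrm{Re}(s)=c$, Stirling's asymptotics give $|\Gamma(\tfrac{j-s}{2})|=O(e^{-\pi|\mathrm{Im}\,s|/4})$ and $|\Gamma(\nu_k-s)|=O(e^{-\pi|\mathrm{Im}\,s|/2})$, so the $s$-contour integral converges absolutely and uniformly with at most polynomial growth in $y^{-c}$; since $\nu_m-c>0$, the double integral of the absolute value is finite, so Fubini applies.

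\textbf{Main obstacle.} No single step is genuinely hard; the only points demanding care are (i) the sign bookkeeping in the base case, handled by the pair $(\sgn x)^j,|x|^s$; and (ii) maintaining the contour at a single value $c<0$ throughout the induction, which works precisely because $c<0\le\nu_m$ for every $m$ so that $\Gamma(\nu_m-s)$ is finite on the contour and no pole is crossed.
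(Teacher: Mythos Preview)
Your proof is correct and follows essentially the same approach as the paper: induction on $M$, with the inductive step carried out by inserting the hypothesis into the recursion \eqref{weight-recursive}, swapping the order of integration, and recognising the $y$-integral as $\Gamma(\nu_m-s)$. The paper phrases the base case as a residue computation rather than Mellin inversion (these are equivalent), and your sign-bookkeeping sentence for $x<0$ is garbled (the trailing ``$\cdot(-1)^j\cdot(-1)^j\cdots$'' is not meaningful), but the content is right: for $x<0$ simply apply the $t>0$ identity with $t=|x|$ and use $x^j=(-1)^j|x|^j$.
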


\begin{proof}
By means of the residue theorem, it is seen that
\begin{equation}
g_j^{(0)}(x)=x^{j}e^{-x^2}=\frac{(\sgn x)^{j}}{4\pi i}\int_{c-i\infty}^{c+i\infty}ds\,|x|^s\,\Gamma\Big(\frac{j-s}{2}\Big).
\end{equation}
Now, assume that $g_j^{(M-1)}(x)$ is given by~\eqref{contour-matrix}. From the recursive formula~\eqref{weight-recursive}, we have
\begin{equation}
g_j^{(M)}(x)=\int_0^\infty \frac{dy}{y}\,y^{\nu_M}e^{-y}\,\frac{(\sgn x)^{j}}{4\pi i}
\int_{c-i\infty}^{c+i\infty}ds\,\Big(\frac{|x|}y\Big)^s\,\Gamma\Big(\frac{j-s}{2}\Big)\prod_{m=1}^{M-1}\Gamma(\nu_m-s).
\end{equation}
It is a straightforward exercise, considering the asymptotic decay, to show that
with $c<0$ the order of the integrals may be interchanged.   Thus we have
\begin{equation}
g_j^{(M)}(x)=
\frac{(\sgn x)^{j}}{4\pi i}\int_{c-i\infty}^{c+i\infty}ds\,|x|^s\,\Gamma\Big(\frac{j-s}{2}\Big)\prod_{m=1}^{M-1}\Gamma(\nu_m-s)\,
\int_0^\infty \frac{dy}{y}\,y^{\nu_M-s}e^{-y}
\end{equation}
and the lemma follows by induction.
\end{proof}

As already mentioned, functions with a contour integral representation like~\eqref{contour-matrix} have certain properties which are useful for analytical purposes. 
In fact, many of these properties may be found in the literature if we first recognise the contour integral as a Fox $H$-function
\begin{equation}\label{weight-fox}
 g_j^{(M)}(x)=\frac{(\sgn x)^{j}}2
\FoxH{M+1}{0}{0}{M+1}{-}{(\nu_1,1),\ldots,(\nu_M,1),(\frac{j}2,\frac12)}{|x|}
\end{equation}
or as a Meijer $G$-function
\begin{equation}\label{weight}
 g_j^{(M)}(x)=(\sgn x)^{j}\prod_{m=1}^M\frac{2^{\nu_m-1}}{\sqrt{\pi}}
 \MeijerG{2M+1}{0}{0}{2M+1}{-}{\frac{\nu_1}{2},\frac{\nu_1+1}{2},\ldots,\frac{\nu_M}{2},\frac{\nu_M+1}{2},\frac{j}2}{\frac{x^2}{4^M}}.
\end{equation}
We refer to the book~\cite{MSH09} for an extensive review of these functions; the Fox $H$- and Meijer $G$-functions are defined by~\cite[Def.~1.1]{MSH09} and~\cite[Def.~1.5]{MSH09}, respectively.

As discussed in Section~\ref{sec:motivation}, one of our goals is to find a `classical gas' approximation for~\eqref{PDF-matrix}. For this purpose, we can use the asymptotic result~\cite{Fi72}
\begin{equation}\label{asymp}
\MeijerG{q}{0}{0}{q}{-}{b_1,\ldots,b_q}{x}\sim\frac1{q^{1/2}}\Big(\frac{2\pi}{x^{1/q}}\Big)^{(q-1)/2}
x^{(b_1+\cdots+b_q)/{q}}e^{-qx^{1/q}}\big(1+O(x^{1/q})\big),
\end{equation}
for $x\to\infty$ (recall that a typical eigenvalue grows with $n$). This immediately allows us to find a Muttalib--Borodin ensemble~\eqref{MB},
which approximates the product ensemble~\eqref{PDF-matrix}.
However, for notational simplicity, it is convenient to first make a change of variables
\begin{equation}\label{change}
x_i\mapsto x_i'=2^M\Big(\frac{y_i}{\sqrt{2M+1}}\Big)^{2M+1}
\end{equation}
for $i=1,\ldots,n$. Using the asymptotic formula~\eqref{asymp} and making a change of variable~\eqref{change}, we find the approximate PDF
\begin{equation}\label{hermite-MB}
\tilde P^{(M)}(y_1,\ldots,y_n)=\frac{1}{\tilde Z^{(M)}}\prod_{1\leq i<j\leq n} (y_j-y_i)(y_j^{2M+1}-y_i^{2M+1})
\prod_{k=1}^n|y_k|^\alpha e^{-y_k^2},
\end{equation}
where $\tilde Z^{(M)}$ is a new normalisation constant and $\alpha=\sum_{m=1}^M(2\nu_m+1)$.
We recognise~\eqref{hermite-MB} as the Hermite Muttalib--Borodin ensemble~\eqref{MB-Hermite}.
Note that the approximation~\eqref{asymp} is valid for large $x$ and that the absolute value of a typical eigenvalue grows with the matrix dimension $n$. Thus, one might suspect agreement between the two models in the large-$n$ limit except for local correlations near the origin. We will return to a comparison between the two models in Section~\ref{sec:hard} and Section~\ref{sec:global}.

It is worth noting that the following exact relation holds
\begin{equation}\label{asymp-exact}
\MeijerG{q}{0}{0}{q}{-}{0,\frac1q,\ldots,\frac{q-1}q}{x}=\frac{(2\pi)^{(q-1)/2}}{q^{1/2}}e^{-qx^{1/q}}
\end{equation}
for integer $q$.
This may be proven by writing the Meijer $G$-function on the left-hand side as its integral representation and then using Gauss' multiplication formula for the gamma functions. The exact relation~\eqref{asymp-exact} tells us that we can choose the parameters $b_k$ $(k=1,\ldots,q)$ in~\eqref{asymp} such that all subleading terms in the expansion vanish, and (\ref{hermite-MB}) is exact.

\section{Biorthogonality and correlations}
\label{sec:biortho}

Generally polynomial ensembles  describe determinantal point processes. The correlation kernel may be written as
\begin{equation}\label{kernel-sum}
K_n(x,y)=\sum_{k=0}^{n-1}\frac{p_k(x)\phi_k(x)}{h_k},
\end{equation}
where $p_k(x)$ and $\phi_k(x)$ are bi-orthogonal functions with normalisation $h_k$, i.e.
\begin{equation}
\int_{-\infty}^\infty dx\, p_k(x)\phi_k(x)=h_k\delta_{k\ell}.
\end{equation}
For both \eqref{PDF-matrix} and \eqref{hermite-MB} the $p_{k}(x)$ will be a monic polynomial of degree $k$, while
\begin{equation}
\phi_k(x)-g_k^{(M)}(x)\in\Span\{g_{k-1}^{(M)}(x),\ldots,g_{0}^{(M)}(x)\}
\end{equation}
for the product ensemble~\eqref{PDF-matrix} and
\begin{equation}
\phi_k(x)-x^{(2M+1)k} |x|^{\alpha}e^{-x^2}\in\Span\{x^{(2M+1)(k-1)} |x|^{\alpha}e^{-x^2},\ldots, |x|^{\alpha}e^{-x^2}\}
\end{equation}
for the Hermite Muttalib--Borodin ensemble~\eqref{hermite-MB}. In the latter case, the bi-orthogonal structure is already known~\cite{Ko67,Ca68,Bo98,FI16}. The main purpose of this section is to determine the bi-orthogonal functions --- and consequently the kernel~\eqref{kernel-sum} --- for the product ensemble~\eqref{PDF-matrix}.

\subsection{The oddness of being even}

For $M=0$, the bi-orthogonal functions are $p_n(x)=\tilde H_n(x)$ and $\phi_n(x)=\tilde H_n(x)e^{-x^2}$ with $\tilde H_n(x)$ denoting the Hermite polynomials in monic normalisation. We recall that the $n$-th Hermite polynomials is an even (odd) function when $n$ is even (odd);
this is due to the reflection symmetry of the Gaussian weight about the origin. A similar phenomenon is present for our product generalisation. In order to see this, we use an alternative form of the kernel~\eqref{kernel-sum}. We have~\cite{Co39,Bo98}
\begin{equation}\label{kernel}
K_n(x,y)=\sum_{k,\ell=0}^{n-1}(B_{n}^{(M)})^{-1}_{\ell,k}\,x^{k}g_{\ell}^{(M)}(y),
\end{equation}
where $B_n^{(M)}=(b_{i,j}^{(M)})_{i,j=0}^{n-1}$ is the $n$-th bi-moment matrix constructed from the bi-moments
\begin{equation}\label{bimoment}
b_{k,\ell}^{(M)}=\int_{-\infty}^\infty x^kg_\ell^{(M)}(x)dx.
\end{equation}
Simon~\cite{Si08} refers  the inverse moment matrix representation~\eqref{kernel} as the ABC (Aitken--Berg--Collar) theorem.

In the following, it will be useful to extend the concept of odd and even moments to bi-moments.
We say that the bi-moments are odd (even) when $k+\ell$ is odd (even).
Now, using that $g_\ell^{(M)}(-x)=(-1)^\ell g_\ell^{(M)}(x)$,
we see that the bi-moments satisfy
\begin{equation}
 b_{k,\ell}^{(M)}=(-1)^{k+\ell}b_{k,\ell}^{(M)},\qquad k,\ell=0,1,\ldots,
\end{equation}
which implies that all odd moments are equal to zero. In other words, the entries in the bi-moment matrix vanishes in a chequerboard pattern. Thus, by reordering rows and columns we may write the bi-moment matrix in a block diagonal form
$B_n^{(M)}\mapsto \diag(B_n^{(M,\text{even})},B_n^{(M,\text{odd})})$ with $B_n^{(M,\text{even})}=(b_{2k,2\ell}^{(M)})_{k,\ell}$ and
$B_n^{(M,\text{odd})}=(b_{2k+1,2\ell+1}^{(M)})_{k,\ell}$.
Using this reordering in the sum~\eqref{kernel}, we see that the kernel splits into two parts
\begin{equation}
K_n(x,y)=K_n^\text{even}(x,y)+K_n^\text{odd}(x,y),
\end{equation}
where
\begin{align}
K_n^\text{even}(x,y)&=\sum_{k,\ell=0}^{\lfloor \frac{n-1}2\rfloor}(B_n^{(M,\text{even})})^{-1}_{\ell,k}\,x^{2k}g_{2\ell}^{(M)}(y)
=\sum_{k=0}^{\lfloor \frac{n-1}2\rfloor} \frac{p_{2k}(x)\phi_{2k}(x)}{h_{2k}}, \label{kernel-even} \\
K_n^\text{odd}(x,y)&=\sum_{k,\ell=0}^{\lfloor \frac n2\rfloor-1}(B_n^{(M,\text{odd})})^{-1}_{\ell,k}\,x^{2k+1}g_{2\ell+1}^{(M)}(y)
=\sum_{k=0}^{\lfloor \frac n2\rfloor-1} \frac{p_{2k+1}(x)\phi_{2k+1}(x)}{h_{2k+1}}. \label{kernel-odd}
\end{align}
Here, the latter equality in both~\eqref{kernel-even} and~\eqref{kernel-odd} follow from comparison with~\eqref{kernel-sum}.
Finally, we note that
\begin{equation}
K_{2n}^\text{even}(x,y)=K_{2n-1}^\text{even}(x,y)
\qquad\text{and}\qquad
K_{2n}^\text{odd}(x,y)=K_{2n+1}^\text{odd}(x,y).
\end{equation}
Thus, in the following we can restrict our attention to kernels with an even subscript.

\subsection{Bi-orthogonal functions}

We are now ready to write down the bi-orthogonal functions for our product ensemble. As explained in the previous subsection, it is convenient to consider functions of odd and even degree separately.
\begin{prop}\label{thm:bi-func-sum}
The ensemble defined by Theorem~\ref{cor-matrix} is bi-orthogonalised by
\begin{align}
p_{2n}(x)&=\sum_{\ell=0}^n\frac{(-\tfrac14)^{n-\ell}}{(n-\ell)!}
\prod_{m=0}^M\frac{\Gamma(\nu_m+2n+1)}{\Gamma(\nu_m+2\ell+1)}x^{2\ell}, &
\phi_{2n}(x)&=\sum_{\ell=0}^n\frac{(-\tfrac14)^{n-\ell}}{(n-\ell)!}\frac{(2n)!}{(2\ell)!}g_{2\ell}^{(M)}(x), \nonumber \\
p_{2n+1}(x)&=\sum_{\ell=0}^n\frac{(-\tfrac14)^{n-\ell}}{(n-\ell)!}
\prod_{m=0}^M\frac{\Gamma(\nu_m+2n+2)}{\Gamma(\nu_m+2\ell+2)}x^{2\ell+1}, &
\phi_{2n+1}(x)&=\sum_{\ell=0}^n\frac{(-\tfrac14)^{n-\ell}}{(n-\ell)!}\frac{(2n+1)!}{(2\ell+1)!}g_{2\ell+1}^{(M)}(x), \nonumber \\
h_n&=2^{-n}\pi^{1/2}\prod_{m=0}^M\Gamma(\nu_m+n+1)\label{h_n},
\end{align}
with notation as above (recall that $\nu_0=0$).
\end{prop}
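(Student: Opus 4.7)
The plan is to reduce the bi-orthogonality verification to the scalar ($M=0$) Hermite case by exhibiting a telescoping cancellation in the bi-moments. The starting observation is that the bi-moments
\[
b_{k,\ell}^{(M)} = \int_{-\infty}^{\infty} x^k g_\ell^{(M)}(x)\,dx
\]
admit a clean recursion in $M$. Substituting the recursive formula~\eqref{weight-recursive} for $g_\ell^{(M)}$ into the definition and performing the change of variables $x \mapsto xy$ yields $b_{k,\ell}^{(M)} = \Gamma(\nu_M+k+1)\, b_{k,\ell}^{(M-1)}$, hence by induction
\[
b_{k,\ell}^{(M)} = b_{k,\ell}^{(0)} \prod_{m=1}^{M} \Gamma(\nu_m+k+1), \qquad b_{k,\ell}^{(0)} = \int_{-\infty}^{\infty} x^{k+\ell} e^{-x^2} dx,
\]
which vanishes whenever $k+\ell$ is odd.

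Next I would substitute the explicit formulas for $p_{2n}$ and $\phi_{2m}$ into $\int p_{2n}\phi_{2m}\,dx$ and combine the Gamma factors. The product of the coefficient $\prod_{r=0}^M \Gamma(\nu_r+2n+1)/\Gamma(\nu_r+2\ell+1)$ appearing in $p_{2n}$ with the factor $\prod_{m=1}^M \Gamma(\nu_m+2\ell+1)$ coming from $b_{2\ell,2j}^{(M)}$ telescopes: the $r\ge 1$ terms collapse to $\prod_{r=1}^M \Gamma(\nu_r+2n+1)$, and the $r=0$ term contributes $(2n)!/(2\ell)!$ (using $\nu_0=0$). The remaining double sum is precisely
\[
\sum_{\ell=0}^{n}\sum_{j=0}^{m} \frac{(-1/4)^{n-\ell}}{(n-\ell)!}\frac{(2n)!}{(2\ell)!}\,\frac{(-1/4)^{m-j}}{(m-j)!}\frac{(2m)!}{(2j)!}\,b_{2\ell,2j}^{(0)} = \int_{-\infty}^{\infty} \tilde H_{2n}(x)\tilde H_{2m}(x)\,e^{-x^2}\,dx,
\]
since the stated coefficients match the explicit expansion of the monic Hermite polynomial, $\tilde H_{2n}(x) = \sum_{\ell=0}^n \frac{(-1/4)^{n-\ell}}{(n-\ell)!}\frac{(2n)!}{(2\ell)!} x^{2\ell}$. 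Standard Hermite orthogonality then gives $2^{-2n}(2n)!\sqrt{\pi}\,\delta_{n,m}$, yielding $\int p_{2n}\phi_{2m}\,dx = h_{2n}\delta_{n,m}$ with the claimed $h_{2n}$.

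The odd-index identity $\int p_{2n+1}\phi_{2m+1}\,dx = h_{2n+1}\delta_{n,m}$ follows by an identical computation with $2n\mapsto 2n+1$ and $\tilde H_{2n+1}$ in place of $\tilde H_{2n}$, while the mixed-parity integrals $\int p_{2n}\phi_{2m+1}\,dx$ and $\int p_{2n+1}\phi_{2m}\,dx$ vanish term-by-term because every bi-moment involved has $k+\ell$ odd. Finally, that $p_{2n}$ and $p_{2n+1}$ are monic of the correct degree, and that $\phi_n - g_n^{(M)}$ lies in the span of $g_{n-1}^{(M)},\ldots,g_0^{(M)}$, are both visible from the leading term $\ell=n$ in the defining sums.

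The main obstacle is really just the bookkeeping of Gamma factors and the observation that the coefficient in $p_{2n}$ has been engineered precisely so that the $\prod_{m=1}^M\Gamma(\nu_m+2\ell+1)$ from the bi-moment is cancelled, stripping the problem down to the known Hermite identity; no deeper tool is required.
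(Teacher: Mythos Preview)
Your proof is correct and rests on the same key observation as the paper: the bi-moments factor as $b_{k,\ell}^{(M)}=b_{k,\ell}^{(0)}\prod_{m=1}^{M}\Gamma(\nu_m+k+1)$, which strips the $M$-dependence and reduces everything to the monic Hermite polynomials. The difference is in execution. The paper works \emph{constructively}: it invokes the determinantal representation $p_n(x)=D_{n-1}^{-1}\det[\,b_{ij}\mid x^i\,]$ (and the analogous formula for $\phi_n$), then pulls the factors $\prod_m\Gamma(\nu_m+i+1)$ out of the rows to reduce the determinant to the $M=0$ Hermite determinant, so that the explicit expansions of $\tilde H_{2n}$ and $\tilde H_{2n+1}$ yield the claimed formulas after the substitutions $x^k\mapsto x^k/\prod_m\Gamma(\nu_m+k+1)$ and $x^k\mapsto g_k^{(M)}(x)$. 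You instead take the formulas as given and \emph{verify} bi-orthogonality directly, observing that the Gamma factors in the coefficient of $p_{2n}$ were chosen precisely to cancel those in $b_{2\ell,2j}^{(M)}$, collapsing the double sum to $\int\tilde H_{2n}\tilde H_{2m}e^{-x^2}\,dx$. Your route is shorter and more elementary (no determinant manipulations), while the paper's route explains where the formulas come from and would let one rediscover them without a guess in hand.
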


There are several different approaches to prove Proposition~\ref{thm:bi-func-sum}. Here, we will present a method which emphasizes the relation to the Hermite polynomials (see~\cite[Prop.~3.5]{Ip15} for the same method applied to the product ensemble of Laguerre type). In order to use this approach, it is convenient to first calculate the bi-moments.

\begin{lemma}
The bi-moments are given by
\begin{equation}\label{bimoment-gamma}
b_{k,\ell}^{(M)}=\Gamma\Big(\frac{k+\ell+1}{2}\Big)\prod_{m=1}^M\Gamma(\nu_m+k+1),
\end{equation}
for $k+\ell$ even and zero otherwise.
The bi-moment determinant is
\begin{equation}\label{bi-det}
D_n^{(M)}:=\det_{0\leq k,\ell\leq n}[b_{k\ell}^{(M)}]=2^{-n(n+1)/2}\pi^{(n+1)/2}\prod_{m=0}^M\prod_{j=0}^n\Gamma(\nu_m+j+1).
\end{equation}
\end{lemma}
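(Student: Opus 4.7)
The plan is to establish both formulas by induction on $M$, starting from the base case $M=0$, which recovers the classical Hermite moment matrix.

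For the bi-moments, first I would observe from the recursion~\eqref{weight-recursive} together with $g_j^{(0)}(-x)=(-1)^j g_j^{(0)}(x)$ that $g_\ell^{(M)}$ inherits the parity of $\ell$; hence the integrand $x^k g_\ell^{(M)}(x)$ is odd whenever $k+\ell$ is odd, giving $b_{k,\ell}^{(M)}=0$ in that case. For $k+\ell$ even, the base case gives directly
\begin{equation*}
b_{k,\ell}^{(0)}=\int_{-\infty}^\infty x^{k+\ell}e^{-x^2}\,dx=\Gamma\!\Big(\frac{k+\ell+1}{2}\Big).
\end{equation*}
For the inductive step, I substitute the recursion for $g_\ell^{(M)}$ into~\eqref{bimoment}, interchange the order of integration (justified by absolute convergence since $g_\ell^{(M-1)}$ decays like the weight), and rescale $x\mapsto xy$ in the inner integral. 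This produces a clean factorisation
\begin{equation*}
b_{k,\ell}^{(M)}=\bigg(\int_0^\infty y^{\nu_M+k}e^{-y}\,dy\bigg)\,b_{k,\ell}^{(M-1)}=\Gamma(\nu_M+k+1)\,b_{k,\ell}^{(M-1)},
\end{equation*}
which by induction yields~\eqref{bimoment-gamma}.

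For the bi-moment determinant, the key observation from the formula just derived is that $b_{k,\ell}^{(M)}=c_k\,b_{k,\ell}^{(0)}$ with $c_k:=\prod_{m=1}^M\Gamma(\nu_m+k+1)$ depending only on the row index. Pulling this common factor out of each row gives
\begin{equation*}
D_n^{(M)}=\bigg(\prod_{k=0}^n c_k\bigg)\,D_n^{(0)}=\bigg(\prod_{m=1}^M\prod_{k=0}^n\Gamma(\nu_m+k+1)\bigg)D_n^{(0)}.
\end{equation*}
It then remains to evaluate $D_n^{(0)}=\det[\int_{-\infty}^\infty x^{k+\ell}e^{-x^2}dx]_{0\le k,\ell\le n}$. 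This is the classical Hankel moment determinant for the Hermite weight, equal to the product of squared norms $\prod_{k=0}^n \tilde h_k$ of the monic Hermite polynomials, where $\tilde h_k=\pi^{1/2}\,k!\,2^{-k}$. Evaluating the product yields $D_n^{(0)}=2^{-n(n+1)/2}\pi^{(n+1)/2}\prod_{j=0}^n j!$, and recalling $\nu_0=0$ so that $\Gamma(\nu_0+j+1)=j!$ completes the derivation of~\eqref{bi-det}.

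The main bookkeeping issue to be careful about is the parity-driven sparsity: even though half of the matrix entries vanish, the row-factorisation argument still works because $c_k$ multiplies an entire row, and the zero entries of $b_{k,\ell}^{(0)}$ remain zero after scaling. The only genuine computation is the classical evaluation of $D_n^{(0)}$, which can equivalently be obtained by permuting rows/columns into the two parity blocks and reducing to products of gamma-function Hankel determinants — either route is standard. No analytic subtleties arise in the Fubini step because the integrands decay super-exponentially.
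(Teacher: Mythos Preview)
Your proof is correct. The bi-moment computation takes a more elementary route than the paper: the paper inserts the contour integral representation~\eqref{contour-matrix} of $g_\ell^{(M)}$ into~\eqref{bimoment} and recognises the resulting double integral as a Mellin/inverse-Mellin pair, whereas you work directly from the recursive definition~\eqref{weight-recursive} by induction on $M$, with the substitution $x\mapsto xy$ producing the clean factorisation $b_{k,\ell}^{(M)}=\Gamma(\nu_M+k+1)\,b_{k,\ell}^{(M-1)}$. Your argument avoids the contour-integral machinery entirely and is arguably more transparent; the paper's route, on the other hand, is a one-shot computation that does not require tracking the inductive hypothesis. For the determinant~\eqref{bi-det} the two proofs are essentially identical: both observe that $b_{k,\ell}^{(M)}$ differs from $b_{k,\ell}^{(0)}$ by a factor depending only on the row index $k$, pull this out of the determinant, and invoke the known Hermite (GUE) moment determinant for the $M=0$ case.
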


\begin{proof}
We insert the contour integral representation of the weight functions~\eqref{contour-matrix} into the expression for the bi-moments~\eqref{bimoment}, then we see that the even moments are
\begin{equation}
b_{k\ell}=\frac{1}{2\pi i}\int_0^\infty dx\int_{c-i\infty}^{c+i\infty} ds \,x^s\,\Gamma\Big(\frac{k+\ell-s}{2}\Big)
\prod_{m=1}^M\Gamma(\nu_m+k-s).
\end{equation}
The integrals in this expression can be recognised as a combination of a Mellin and an inverse Mellin transform, which yields~\eqref{bimoment-gamma}.

In order to evaluate the bi-moment determinant~\eqref{bi-det}, we note that
\begin{equation}\label{bimoment-det}
D_n^{(M)}=\prod_{m=1}^M\prod_{j=0}^n\Gamma(\nu_m+j+1)\det_{0\leq k,\ell\leq n}[b_{k\ell}^{(M=0)}].
\end{equation}
This completes the proof, since the $M=0$ case is the well-known Hermite (or GUE) case.
\end{proof}

\begin{proof}[Proof of Proposition~\ref{thm:bi-func-sum}]
The bi-orthogonal functions may be expressed by means of their bi-moments exactly as orthogonal polynomials through their moments.
Thus, we have
\begin{equation}
p_n(x)=\frac{1}{D_{n-1}^{(M)}}
\det_{\substack{i=0,\ldots,n\\j=0,\ldots,n-1}}
\begin{bmatrix}
b_{i,j}^{(M)} \bigg\vert\, x^i
\end{bmatrix}
\qquad\text{and}\qquad
\phi_n(x)=\frac{1}{D_{n-1}^{(M)}}
\det_{\substack{i=0,\ldots,n-1\\j=0,\ldots,n}}
\bigg[\begin{array}{@{}cc @{}}
b_{i,j}^{(M)}  \\ \hline
g_j^{(M)}(x)
\end{array}\bigg].
\end{equation}
Furthermore, we have $h_n=D_n^{(M)}/D^{(M)}_{n-1}$.

The constants $h_n$ are immediate from the above and~\eqref{bi-det}. Thus, it remains only to find the bi-orthogonal functions. To do so, we first note that
\begin{align}
p_n(x)&=\frac{\prod_{k=0}^n\prod_{m=1}^M\Gamma(\nu_m+k+1)}{D_{n-1}^{(M)}}
\det_{\substack{i=0,\ldots,n\\j=0,\ldots,n-1}}
\begin{bmatrix}
b_{i,j}^{(M=0)} \bigg\vert\, \displaystyle\frac{x^i}{\prod_{m=1}^M\Gamma(\nu_m+i+1)}
\end{bmatrix},\\
\phi_n(x)&=\frac{\prod_{k=0}^{n-1}\prod_{m=1}^M\Gamma(\nu_m+k+1)}{D_{n-1}^{(M)}}
\det_{\substack{i=0,\ldots,n-1\\j=0,\ldots,n}}
\bigg[\begin{array}{@{}cc @{}}
b_{i,j}^{(M=0)}  \\ \hline
g_j^{(M)}(x)
\end{array}\bigg].
\end{align}
This observation is important, since we know that the monic Hermite polynomials (with respect to the weight $e^{-x^2}$) are given by
\begin{equation}
\tilde H_n(x)=2^{-n}H_n(x)=
\frac{1}{D_{n-1}^{(M=0)}}
\det_{\substack{i=0,\ldots,n\\j=0,\ldots,n-1}}
\begin{bmatrix}
b_{i,j}^{(M=0)} \bigg\vert\, x^i
\end{bmatrix}.
\end{equation}
It follows that the expressions for the bi-orthogonal function $p_n(x)$ and $\phi_n(x)$ can be found using the known expressions for the Hermite polynomials and then making substitutions
\begin{equation}
x^k\mapsto \frac{x^k}{\prod_{m=1}^M\Gamma(\nu_m+k+1)}
\qquad\text{and}\qquad
x^k\mapsto g_k^{(M)}(x),
\end{equation}
respectively. We recall that
\[
\tilde H_{2n}(x)=\sum_{\ell=0}^n\frac{(-1)^{n-\ell}}{(n-\ell)!}\frac{(2n)!}{(2\ell)!}\frac{(2x)^{2\ell}}{2^{2n}}
\quad\text{and}\quad
\tilde H_{2n+1}(x)=\sum_{\ell=0}^n\frac{(-1)^{n-\ell}}{(n-\ell)!}\frac{(2n+1)!}{(2\ell+1)!}\frac{(2x)^{2\ell+1}}{2^{2n+1}},
\]
which makes it a straightforward exercise to verify the proposition.
\end{proof}

\begin{remark}
We recall that the bi-orthogonal functions also can be obtained from the characteristic polynomial using that
\begin{equation}
p_N(x)=\big\langle\det[x\mathbb I_N-W_M]\,\big\rangle
\qquad\text{and}\qquad
\int_{\mathbb R} dx\,\frac{\phi_{N-1}(x)}{z-x}=\Big\langle\,\frac1{\det[z\mathbb I_N-W_M]}\,\Big\rangle
\end{equation}
with $\langle\cdots\rangle$ denoting the matrix average and $z\in\mathbb C\setminus\mathbb R$. The first relation allows for an alternative method to calculate $p_N(x)$; see e.g.~\cite{FL15}.
\end{remark}

\section{Integral representations and correlation kernels}
\label{sec:int}

The explicit expressions for the bi-orthogonal functions given by Proposition~\ref{thm:bi-func-sum} allow us to write down an explicit form for the correlation kernel by insertion in~\eqref{kernel-sum}. However, this formulation of the kernel is not optimal for asymptotic analysis. For this reason, in this section we will provide integral representations of the bi-orthogonal functions as well as the kernel.

\begin{prop}\label{prop:bi-func-int}
The bi-orthogonal functions given by Proposition~\ref{thm:bi-func-sum} have integral representations
\begin{align}
p_{2n}(|x|)&=\frac{\sqrt\pi(2n)!}{(-1)^n2^{2n}}\frac{1}{2\pi i}\oint_\Sigma ds\,|x|^{2s}\,
\frac{\Gamma(-s)}{\Gamma(n+1-s)\Gamma(s+\frac12)}
\prod_{m=1}^M\frac{\Gamma(\nu_m+2n+1)}{\Gamma(\nu_m+2s+1)}, \label{p2n-int} \\
p_{2n+1}(|x|)&=\frac{\sqrt\pi(2n+1)!}{(-1)^n2^{2n+1}}\frac{1}{2\pi i}\oint_\Sigma ds\,|x|^{2s+1}\,
\frac{\Gamma(-s)}{\Gamma(n+1-s)\Gamma(s+\frac32)}
\prod_{m=1}^M\frac{\Gamma(\nu_m+2n+2)}{\Gamma(\nu_m+2s+2)}, \label{p2n1-int} \\
\frac{\phi_{2n}(|x|)}{h_{2n}}&=\frac{(-1)^n\,2^{2n}}{\sqrt\pi(2n)!}\frac{1}{2\pi i}\int_{c-i\infty}^{c+i\infty} dt\,
|x|^{-2t-1}\frac{\Gamma(n-t)\Gamma(t+\frac12)}{\Gamma(-t)}
\prod_{m=1}^M\frac{\Gamma(\nu_m+2t+1)}{\Gamma(\nu_m+2n+1)}, \\
\frac{\phi_{2n+1}(|x|)}{h_{2n+1}}&=\frac{(-1)^n\,2^{2n+1}}{\sqrt\pi(2n+1)!}\frac{1}{2\pi i}\int_{c-i\infty}^{c+i\infty} dt\,
|x|^{-2t-2}\frac{\Gamma(n-t)\Gamma(t+\frac32)}{\Gamma(-t)}
\prod_{m=1}^M\frac{\Gamma(\nu_m+2t+2)}{\Gamma(\nu_m+2n+2)},
\end{align}
where the contour $\Sigma$ encloses the integers $0,1,\ldots,n$ in the negative direction and $-\frac12<c<0$.
We recall that $p_{2n}(x)$ and $\phi_{2n}(x)$ are even functions, while $p_{2n+1}(x)$ and $\phi_{2n+1}(x)$ are odd functions.
\end{prop}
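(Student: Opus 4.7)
The plan is to split the proof into two halves. For the polynomials $p_{2n}, p_{2n+1}$ I compute the claimed contour integrals by residues, while for the dual functions $\phi_{2n},\phi_{2n+1}$ I substitute the Mellin--Barnes representation \eqref{contour-matrix} of $g_j^{(M)}$ into the sums from Proposition~\ref{thm:bi-func-sum} and evaluate an inner finite sum in closed form.

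For $p_{2n}(|x|)$: collapse the closed contour $\Sigma$ onto the simple poles of $\Gamma(-s)$ at $s=0,1,\ldots,n$, at which $\mathrm{Res}_{s=\ell}\Gamma(-s)=(-1)^{\ell+1}/\ell!$. With the negative orientation of $\Sigma$ supplying an overall minus sign, the Legendre duplication identity $\Gamma(\ell+\tfrac12)=\sqrt{\pi}\,(2\ell)!/(4^\ell\ell!)$ converts $1/\Gamma(s+\tfrac12)\big|_{s=\ell}$ into the exact combination of factorials appearing in Proposition~\ref{thm:bi-func-sum}, while $1/\Gamma(n+1-s)\big|_{s=\ell}=1/(n-\ell)!$ supplies the denominator of the summand. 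After absorbing the prefactor $\sqrt{\pi}\,(2n)!/((-1)^n 2^{2n})$ one recovers the claimed sum. The $p_{2n+1}$ claim follows identically after a shift of half-integer arguments.

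For $\phi_{2n}(|x|)$: insert \eqref{contour-matrix} into the sum for $\phi_{2n}$ and swap the finite sum with the contour integral (the exponential decay of products of gammas along vertical lines, from Stirling's formula, makes this automatic). After the change of variables $s=-2t-1$ and division by $h_{2n}$ from \eqref{h_n}, the claim reduces to the polynomial identity
\begin{equation*}
\sum_{\ell=0}^n\frac{(-1/4)^{n-\ell}(2n)!}{(n-\ell)!(2\ell)!}\,\Gamma(\ell+t+\tfrac12)=(-1)^n\,\frac{\Gamma(n-t)\Gamma(t+\tfrac12)}{\Gamma(-t)}.
\end{equation*}
I would establish this identity by recognising its left-hand side as a Gaussian integral: since $\Gamma(\ell+t+\tfrac12)=\int_{-\infty}^\infty x^{2\ell}|x|^{2t}e^{-x^2}\,dx$ for $\Re t>-\tfrac12$, the sum equals $\int_{-\infty}^\infty \tilde H_{2n}(x)|x|^{2t}e^{-x^2}\,dx$ with $\tilde H_{2n}$ the monic Hermite polynomial appearing in the proof of Proposition~\ref{thm:bi-func-sum}. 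Using the classical relation $\tilde H_{2n}(\sqrt u)=(-1)^n n!\,L_n^{(-1/2)}(u)$ and substituting $u=x^2$, the integral becomes $(-1)^n n!\int_0^\infty L_n^{(-1/2)}(u)u^{t-1/2}e^{-u}\,du$, a standard Mellin transform of a Laguerre polynomial whose value is exactly the right-hand side. The case $\phi_{2n+1}$ follows by the analogous Hermite--Laguerre reduction $\tilde H_{2n+1}(\sqrt u)=(-1)^n n!\sqrt u\,L_n^{(1/2)}(u)$, after noting that the odd analogue of the integral representation is $\Gamma(\ell+t+\tfrac32)=\int_{-\infty}^\infty x^{2\ell+1}\sgn(x)|x|^{2t+1}e^{-x^2}\,dx$.

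I expect the main obstacle to be spotting the Laguerre reformulation of the internal finite sum for the $\phi$ functions. All other steps---residue bookkeeping for the $p$'s, Legendre duplication, and the sum--integral interchange---are routine; the non-trivial observation is that an awkward sum of gamma functions conceals a classical Mellin transform of an orthogonal polynomial, after which the identity follows immediately from tabulated evaluations.
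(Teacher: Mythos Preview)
Your proposal is correct. For $p_{2n}$ and $p_{2n+1}$ your residue computation is exactly what the paper does. For the $\phi$'s the two arguments share the same core idea---reduce to a Mellin-type integral of a Hermite polynomial---but are organised differently. The paper first rewrites the weight as $g_\ell^{(M)}(x)=(\sgn x)^\ell\int_0^\infty (dy/y)\,y^\ell e^{-y^2}\,G(|x|/y)$ with $G$ a Meijer $G$-function independent of $\ell$, so that summing over $\ell$ in Proposition~\ref{thm:bi-func-sum} immediately gives $\phi_n(x)=\int_0^\infty (dy/y)\,e^{-y^2}\tilde H_n(y)\,G(|x|/y)$; it then expresses $e^{-y^2}\tilde H_n(y)$ itself as a Fox $H$-function and quotes a tabulated Mellin convolution of two such functions. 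You instead insert the Mellin--Barnes form \eqref{contour-matrix} directly, swap the finite sum with the contour integral, and isolate the resulting gamma sum, which you evaluate via the Hermite--Laguerre identity and the standard Laguerre Mellin transform $\int_0^\infty e^{-u}u^{\beta}L_n^{(\alpha)}(u)\,du=\Gamma(\beta+1)\Gamma(\alpha-\beta+n)/\big(n!\,\Gamma(\alpha-\beta)\big)$. Your route is more explicit and self-contained (no appeal to Meijer $G$ convolution tables), whereas the paper's is shorter once one accepts that machinery; both ultimately rest on the same Hermite--Laguerre reduction \eqref{HL}.
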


\begin{proof}
The integrands in~\eqref{p2n-int} and~\eqref{p2n1-int} has $n+1$ simple poles located at $0,1,\ldots,n$, thus the series representations in Proposition~\ref{thm:bi-func-sum} follow upon a straightforward application of the residue theorem.

In order to find the integral representation of the bi-orthogonal functions $\phi_n(x)$, we first note that the weight functions can be written as
\begin{equation}\label{weight-int-proof}
g_\ell^{(M)}(x)=(\sgn x)^\ell\int_0^\infty \frac{dy}{y}\,y^\ell\,e^{-y^2}
\MeijerG{M}{0}{0}{M}{-}{\nu_1,\ldots,\nu_M}{\frac{|x|}{y}};
\end{equation}
this is easily seen starting from the recursive definition~\eqref{weight-recursive}. Now, using~\eqref{weight-int-proof} in the expression for $\phi_n(x)$ (cf. Proposition~\ref{thm:bi-func-sum}), we see that
\begin{equation}
\phi_n(x)=\int_0^\infty \frac{dy}{y}e^{-y^2}\tilde H_n(y)
\MeijerG{M}{0}{0}{M}{-}{\nu_1,\ldots,\nu_M}{\frac{|x|}{y}}.
\end{equation}
In other words, the bi-orthogonal functions $\phi_n(x)$ are an integral transform of the Hermite polynomials with respect to a Meijer $G$-function as integral kernel. The Hermite polynomial can itself be expressed as a Meijer $G$- or Fox $H$-function (see~\cite[Sec.~1.8.1.]{MSH09}) and the remaining integral is well-known from the literature~\cite[Sec.~2.3.]{MSH09}.
\end{proof}

\begin{prop}\label{prop:kernel-finite}
Integral representations of the even and odd kernels are
\begin{align}
 K_{2n}^\textup{even}(x,y)&=\frac{1}{2(2\pi i)^2}\int_{c-i\infty}^{c+i\infty} dt\oint_\Sigma ds\,\frac{|x|^{s}|y|^{-t-1}}{s-t}
 \frac{\Gamma(-\frac s2)\Gamma(\frac{t+1}2)}{\Gamma(-\frac t2)\Gamma(\frac{s+1}2)}\frac{\Gamma(\frac {2n-t}2)}{\Gamma(\frac{2n-s}2)}
 \prod_{m=1}^M\frac{\Gamma(\nu_m+t+1)}{\Gamma(\nu_m+s+1)}, \nonumber \\
 K_{2n}^\textup{odd}(x,y)&=\frac{\sgn(xy)}{2(2\pi i)^2}\int_{c-i\infty}^{c+i\infty} dt\oint_\Sigma ds\,\frac{|x|^{s}\,|y|^{-t-1}}{s-t}\,
 \frac{\Gamma(\frac{1-s}2)\Gamma(\frac{t+2}2)}{\Gamma(\frac{1-t}2)\Gamma(\frac{s+2}2)}
 \frac{\Gamma(\frac{2n-t+1}2)}{\Gamma(\frac{2n-s+1}2)}
 \prod_{m=1}^M\frac{\Gamma(\nu_m+t+1)}{\Gamma(\nu_m+s+1)},
\end{align}
with $-1<c<0$ and the contour $\Sigma$ is chosen such that it encircles $0,1,\ldots,2n-1$ in the negative direction with $\textup{Re}\{s\}>c$ for all $s\in\Sigma$.
\end{prop}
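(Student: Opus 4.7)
The plan is to substitute the Mellin--Barnes representations from Proposition~\ref{prop:bi-func-int} into the bi-orthogonal expansions (\ref{kernel-even}) and (\ref{kernel-odd}), exchange the finite sum with the double contour integral, and evaluate the resulting sum over $k$ in closed form. Writing $p_{2k}(x)\phi_{2k}(y)/h_{2k}$ as the product of an $s$-integral coming from $p_{2k}$ and a $t$-integral coming from $\phi_{2k}/h_{2k}$, the $k$-dependent prefactors $\sqrt\pi(2k)!/((-1)^{k}2^{2k})$ and $(-1)^{k}2^{2k}/(\sqrt\pi(2k)!)$ are reciprocals and cancel, and likewise the factor $\prod_{m=1}^{M}\Gamma(\nu_m+2k+1)$ appears in both numerator and denominator and drops out. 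After swapping sum and integrals (justified by the Stirling decay of the $\Gamma$-factors along the vertical $t$-line), all the $k$-dependence collects into the single ratio $\Gamma(k-t)/\Gamma(k+1-s)$.

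The key algebraic step is the discrete telescoping identity
\begin{equation*}
\sum_{k=0}^{n-1}\frac{\Gamma(k-t)}{\Gamma(k+1-s)}=\frac{1}{s-t}\left[\frac{\Gamma(n-t)}{\Gamma(n-s)}-\frac{\Gamma(-t)}{\Gamma(-s)}\right],
\end{equation*}
which follows from $\Delta_{k+1}-\Delta_k=(s-t)\Gamma(k-t)/\Gamma(k+1-s)$ for $\Delta_k:=\Gamma(k-t)/\Gamma(k-s)$. Substituting back produces two pieces. The first, after the rescaling below, reproduces the integrand claimed in the proposition. The second carries a factor $\Gamma(-t)/\Gamma(-s)$ which cancels the $\Gamma(-s)$ inherited from the $p_{2k}$-representation; the remaining $s$-integrand is then entire in $s$ apart from the simple pole at $s=t$, and the condition $\operatorname{Re}(s)>c=\operatorname{Re}(t)$ on $\Sigma$ places this pole outside $\Sigma$, so the $s$-integral vanishes by Cauchy's theorem. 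I expect this vanishing of the boundary piece to be the main subtlety, as it requires checking that the remaining denominator factors $1/\Gamma(s+\tfrac12)$ and $1/\Gamma(\nu_m+2s+1)$ contribute only zeros, not poles, inside~$\Sigma$.

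The final step is a rescaling. For the even kernel, $s\mapsto s/2$, $t\mapsto t/2$ converts $|x|^{2s}|y|^{-2t-1}$ into $|x|^{s}|y|^{-t-1}$ and recasts every Gamma argument into the half-integer form listed in the proposition; the Jacobian $\tfrac14$ combines with $1/(s-t)\mapsto 2/(s-t)$ to produce the prefactor $1/(2(2\pi i)^{2})$. The odd kernel is handled identically using $p_{2k+1}$ and $\phi_{2k+1}/h_{2k+1}$; the factor $\sgn(xy)$ is inherited from the oddness of these functions, and the shift $s\mapsto(s-1)/2$, $t\mapsto(t-1)/2$ displaces the poles of the original $\Gamma(-s)$ from the even integers $0,2,\ldots,2n-2$ to the odd integers $1,3,\ldots,2n-1$. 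Since the even (resp.\ odd) integrand has no singularity at odd (resp.\ even) integers, taking a single contour $\Sigma$ that encircles $\{0,1,\ldots,2n-1\}$ in the negative direction, with $\operatorname{Re}(s)>c$, is harmless for each piece and recovers the finite sums (\ref{kernel-even}), (\ref{kernel-odd}) via the residue theorem, closing the argument.
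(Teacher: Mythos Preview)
Your proposal is correct and follows essentially the same route as the paper: insert the contour-integral representations of the bi-orthogonal functions into the sums \eqref{kernel-even}--\eqref{kernel-odd}, collapse the $k$-sum via the telescoping identity $\sum_{k=0}^{n-1}\Gamma(k-t)/\Gamma(k+1-s)=(s-t)^{-1}[\Gamma(n-t)/\Gamma(n-s)-\Gamma(-t)/\Gamma(-s)]$, discard the boundary term because its $s$-integrand is pole-free inside $\Sigma$, and rescale $s,t$ to obtain the stated form. Your treatment is slightly more explicit than the paper's (you spell out the cancellation of the $k$-dependent prefactors, the role of the pole at $s=t$, and the odd-case shift $s\mapsto(s-1)/2$), but there is no substantive difference in method.
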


\begin{proof}
As the proofs for the odd and even kernels are almost identical, we provide only the proof for the even case. The odd case is easily verified by the reader.

It follows the definition of the even kernel~\eqref{kernel-even} together with contour integral representation of the bi-orthogonal functions from Proposition~\ref{prop:bi-func-int}, that
\begin{equation}
 K_{2n}^\text{even}(x,y)=\frac{1}{(2\pi i)^2}\int dt\oint_\Sigma ds\,|x|^{2s}|y|^{-2t-1}
 \frac{\Gamma(-s)\Gamma(t+\frac12)}{\Gamma(-t)\Gamma(s+\frac12)}\prod_{m=1}^M\frac{\Gamma(\nu_m+2t+1)}{\Gamma(\nu_m+2s+1)}
 \sum_{k=0}^{n-1}\frac{\Gamma(k-t)}{\Gamma(k+1-s)}.
\end{equation}
Following similar steps as in~\cite{KZ14}, we note the sum allows a telescopic evaluation. This gives
\begin{multline}
 K_{2n}^\text{even}(x,y)=\frac{1}{(2\pi i)^2}\int_{c-i\infty}^{c+i\infty} dt\oint_\Sigma ds\,\frac{|x|^{2s}|y|^{-2t-1}}{s-t}
 \frac{\Gamma(-s)\Gamma(t+\frac12)}{\Gamma(-t)\Gamma(s+\frac12)}\frac{\Gamma(n-t)}{\Gamma(n-s)}
 \prod_{m=1}^M\frac{\Gamma(\nu_m+2t+1)}{\Gamma(\nu_m+2s+1)}\\
 -\frac{1}{(2\pi i)^2}\int_{c-i\infty}^{c+i\infty} dt\oint_\Sigma ds\,\frac{|x|^{2s}|y|^{-2t-1}}{s-t}
 \frac{\Gamma(t+\frac12)}{\Gamma(s+\frac12)}\prod_{m=1}^M\frac{\Gamma(\nu_m+2t+1)}{\Gamma(\nu_m+2s+1)}.
\end{multline}
Here, the integrand on the second line is zero as it has no poles encircled by the contour $\Sigma$ and, thus
\begin{equation}
 K_{2n}^\text{even}(x,y)=\frac{1}{(2\pi i)^2}\int_{c-i\infty}^{c+i\infty} dt\oint_\Sigma ds\,\frac{|x|^{2s}|y|^{-2t-1}}{s-t}
 \frac{\Gamma(-s)\Gamma(t+\frac12)}{\Gamma(-t)\Gamma(s+\frac12)}\frac{\Gamma(n-t)}{\Gamma(n-s)}
 \prod_{m=1}^M\frac{\Gamma(\nu_m+2t+1)}{\Gamma(\nu_m+2s+1)}.
\end{equation}
Finally, the proposition follows by a change of variables $s\mapsto s/2$ and $t\mapsto t/2$.
\end{proof}


The above integral representations for the bi-orthogonal functions and the kernel are probably the most convenient form for further asymptotic analysis, as we will see in Section~\ref{sec:hard}. However, it is also often helpful to express these formulae in terms of special functions as (for example) it allows for use of pre-defined mathematical software. Furthermore, such reformulations often guide us to recognise patterns which otherwise would have been left unseen.

The integral representations for the bi-orthogonal functions given by Proposition~\ref{prop:bi-func-int} can also be recognised as several different types of special function; this includes generalised hypergeometric, Meijer $G$-, and Fox $H$-functions. Here, we will restrict ourselves to their Meijer $G$-function formulation.

Let us first consider the bi-orthogonal polynomials which may be written as
\begin{align}
p_{2n}(x)&=\frac{(-1)^n}{2^{2n}}\prod_{m=0}^M\frac{\Gamma(\nu_m+2n+1)}{2^{\,\nu_m}\pi^{-1/2}}  \nonumber
\MeijerG{1}{0}{1}{2M+2}{n+1}{-\frac{\nu_0}2,-\frac{\nu_0}2+\frac12,\ldots,-\frac{\nu_M}2,-\frac{\nu_M}2+\frac12}{\frac{x^2}{2^{2M}}}, \\
\frac{p_{2n+1}(x)}x&=\frac{(-1)^n}{2^{2n}}\prod_{m=0}^M\frac{\Gamma(\nu_m+2n+2)}{2^{\,\nu_m+1}\pi^{-1/2}}
\MeijerG{1}{0}{1}{2M+2}{n+1}{-\frac{\nu_0}2,-\frac{\nu_0}2-\frac12,\ldots,-\frac{\nu_M}2,-\frac{\nu_M}2-\frac12}{\frac{x^2}{2^{2M}}}.
\label{p_2n-meijer}
\end{align}
It is worth comparing these polynomials with the polynomial found in the study of the Laguerre-like matrix product~\eqref{old-product}.
Akemann et al.~\cite{AIK13} found that in this case the bi-orthogonal polynomial is given by
\begin{equation}
P_n^{(M)}(x)=(-1)^n\prod_{m=0}^M\Gamma(\nu_m+n+1)
\MeijerG{1}{0}{0}{M+1}{n+1}{-\nu_0,-\nu_1,\ldots,-\nu_M}{x}.
\end{equation}
It is clear that the two families of polynomials are related as
\begin{equation}
p_{2n}(x)\propto P_n^{(2M+1)}\Big(\frac{x^2}{2^{2M}}\Big)
\qquad\text{and}\qquad
p_{2n+1}(x)\propto xP_n^{(2M+1)}\Big(\frac{x^2}{2^{2M}}\Big)
\end{equation}
with
\begin{equation}\label{nu-map}
\{\nu_m\}_{m=0}^M\mapsto\{{\nu_m}/2,({\nu_m}-1)/2\}_{m=0}^M
\qquad\text{and}\qquad
\{\nu_m\}_{m=0}^M\mapsto\{{\nu_m}/2,({\nu_m}+1)/2\}_{m=0}^M,
\end{equation}
respectively.
This is a generalisation of the relation between Hermite and Laguerre polynomials. Recall that
\begin{equation}\label{HL}
\tilde H_{2n}(x)=\tilde L^{(-\frac12)}_n(x^2)
\qquad\text{and}\qquad
\tilde H_{2n+1}(x)=x\tilde L^{(+\frac12)}_n(x^2),
\end{equation}
where $\tilde H_n(x)$ and $\tilde L_n^{(\alpha)}(x)$ denote the Hermite and Laguerre polynomials in monic normalisation.

Likewise, the (non-polynomial) bi-orthogonal functions may be written as
\begin{align}
\frac{\phi_{2n}(|x|)}{|x|}&=(-1)^n\prod_{m=1}^M\frac{2^{\nu_m-2}}{\pi^{1/2}}
\MeijerG{2M+1}{1}{1}{2M+2}{-n}{\frac{\nu_M}2-\frac12,\frac{\nu_M}2,\ldots,\frac{\nu_0}2-\frac12,\frac{\nu_0}2}{\frac{x^2}{2^{2M}}}, \\
\phi_{2n+1}(|x|)&=(-1)^n\prod_{m=1}^M\frac{2^{\nu_m-1}}{\pi^{1/2}}
\MeijerG{2M+1}{1}{1}{2M+2}{-n}{\frac{\nu_M}2+\frac12,\frac{\nu_M}2,\ldots,\frac{\nu_0}2+\frac12,\frac{\nu_0}2}{\frac{x^2}{2^{2M}}}.
\end{align}
Again, we want to compare to the formula in~\cite{AIK13} which this time reads
\begin{equation}
\Phi_n^{(M)}(x)=(-1)^n\MeijerG{M}{1}{1}{M+1}{-n}{\nu_M,\ldots,\nu_1,\nu_0}{x}.
\end{equation}
Evidently, we have the following relations
\begin{equation}\label{phi-relations}
\phi_{2n}(|x|)\propto |x|\Phi_n^{(2M+1)}\Big(\frac{x^2}{2^{2M}}\Big)
\qquad\text{and}\qquad
\phi_{2n+1}(|x|)\propto \Phi_n^{(2M+1)}\Big(\frac{x^2}{2^{2M}}\Big),
\end{equation}
with~\eqref{nu-map} as before.
Yet again, this is a generalisation of the relation between Hermite and Laguerre polynomials. In the simplest case the relations~\eqref{phi-relations} reduces to
\begin{equation}
\tilde H_{2n}(x)w_\text{H}(x)=|x|\tilde L^{(-\frac12)}_n(x^2)w_\text{L}^{(-\frac12)}(x^2)
\qquad\text{and}\qquad
\tilde H_{2n+1}(|x|)w_\text{H}(x)=\tilde L^{(+\frac12)}_n(x^2)w_\text{L}^{(\frac12)}(x^2), \nonumber
\end{equation}
where $w_\text{H}(x)=e^{-x^2}$ and $w_\text{L}^{(\alpha)}(x)=x^\alpha e^{-x}$ are the Hermite and Laguerre weight functions.

It is, of course, well-known that there are relations between ensembles with reflection symmetry about the origin and ensembles on the half-line (albeit explicit formulae may be elusive). A general description of such relations in the Muttalib--Borodin ensembles can be found in~\cite{FI16}.

\section{Scaling limits at the origin in product and Muttalib--Borodin ensembles}
\label{sec:hard}

With the integral representations of the correlation kernels established by Proposition~\ref{prop:kernel-finite}, we can turn to a study of asymptotic properties. Perhaps the most interesting scaling regime is that of the local correlations near the origin, referred to as the hard edge when the eigenvalues are strictly positive. For other product ensembles~\cite{KZ14,Fo14,KKS15,FL16}, it has been observed that correlations at the hard edge is determined by the so-called Meijer $G$-kernel, which generalises the more familiar Bessel kernel. Below, we will see that the Meijer $G$-kernel  appears once again, but this time
involving a sum.

\begin{thm}\label{thm:hard} Let $K_n(x,y)=K_n^\textup{even}(x,y)+K_n^\textup{odd}(x,y)$ with the even and odd kernels given by Proposition~\ref{prop:kernel-finite}. For $x,y\in\mathbb R\setminus\{0\}$ and $\nu_1\ldots,\nu_M$ fixed, the microscopic limit near the origin is
\begin{equation}\label{hard-limit}
\lim_{n\to\infty}\frac{1}{\sqrt n}K_{2n}\Big(\frac x{\sqrt n},\frac y{\sqrt n}\Big)
=K^\textup{even}(x,y)+K^\textup{odd}(x,y)
\end{equation}
with
\begin{align}
K^\textup{even}(x,y)
&=\frac{1}{2(2\pi i)^2}\int_{c-i\infty}^{c+i\infty} dt\int_\Sigma ds\,\frac{|x|^{s}|y|^{-t-1}}{s-t}
 \frac{\Gamma(-\frac s2)\Gamma(\frac{t+1}2)}{\Gamma(-\frac t2)\Gamma(\frac{s+1}2)}
 \prod_{m=1}^M\frac{\Gamma(\nu_m+t+1)}{\Gamma(\nu_m+s+1)} \label{hard-even} \\
K^\textup{odd}(x,y)&=\frac{\sgn(xy)}{2(2\pi i)^2}\int_{c-i\infty}^{c+i\infty} dt\int_\Sigma ds\,\frac{|x|^{s}\,|y|^{-t-1}}{s-t}\,
 \frac{\Gamma(\frac{1-s}2)\Gamma(\frac{t+2}2)}{\Gamma(\frac{1-t}2)\Gamma(\frac{s+2}2)}
 \prod_{m=1}^M\frac{\Gamma(\nu_m+t+1)}{\Gamma(\nu_m+s+1)} \label{hard-odd},
\end{align}
where $-1<c<-1/2$ and $\Sigma$ encloses the non-negative half-line in the negative direction starting and ending at $+\infty$ such that $\textup{Re}\{s\}>c$ for all $s\in\Sigma$.
\end{thm}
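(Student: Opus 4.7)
The plan is to apply the rescaling $x\mapsto x/\sqrt n$, $y\mapsto y/\sqrt n$ to the even and odd kernels in Proposition~\ref{prop:kernel-finite} and then extract the limit by dominated convergence on both the vertical $t$-line and the enclosed-integer $s$-contour. All of the $n$-dependence of the rescaled integrand is captured in two places: the combined rescaling of $|x|^s|y|^{-t-1}$ together with the prefactor $1/\sqrt n$ yields $n^{(t-s)/2}$, while the only other $n$-dependent quantity in the integrand is the gamma ratio $\Gamma((2n-t)/2)/\Gamma((2n-s)/2)=\Gamma(n-t/2)/\Gamma(n-s/2)$. Applying Stirling's formula in the form $\Gamma(n+a)/\Gamma(n+b)\sim n^{a-b}$ for bounded $a,b$, this ratio is asymptotically $n^{(s-t)/2}$, so the combined $n$-dependent prefactor tends pointwise to $1$ and the integrand converges pointwise to the integrand of $K^{\textup{even}}(x,y)$ (respectively $K^{\textup{odd}}(x,y)$).

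Next I would fix a specific shape for the $n$-dependent contour $\Sigma_n$ enclosing $\{0,1,\ldots,2n-1\}$ in the negative direction; a convenient choice is a Hankel/keyhole contour running from some large $R_n>2n-1$ inward along a horizontal ray at height $+\varepsilon$, looping around the origin, and returning to $R_n$ along a ray at height $-\varepsilon$, staying to the right of $\mathrm{Re}(s)=c$. As $n\to\infty$, the horizontal portions stretch to cover $[0,+\infty)$, so $\Sigma_n$ deforms naturally into the limit contour described in the statement. Once the dominated convergence argument is in place, the two parts of the kernel yield exactly~\eqref{hard-even} and~\eqref{hard-odd}, and the odd case follows identically --- the only differences being the shifts in the trigonometric-type gamma quotient and the $\sgn(xy)$ prefactor, both of which pass through the asymptotic analysis unchanged.

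The principal obstacle is to furnish $n$-independent integrable majorants so as to justify the interchange of $\lim_{n\to\infty}$ with the two integrations. On the vertical $t$-line at $\mathrm{Re}\,t=c$, the gamma quotient $\Gamma(\tfrac{t+1}{2})/\Gamma(-\tfrac t2)$ (or its odd-case analogue) decays polynomially, while $\prod_m\Gamma(\nu_m+t+1)$ produces exponential decay as $|\mathrm{Im}\,t|\to\infty$ by Stirling, and the factor $(s-t)^{-1}$ gives extra decay once $s$ is kept at distance $\geq |c|/2$ from $t$. On $\Sigma_n$, the crucial point is to control $n^{(t-s)/2}\Gamma(n-t/2)/\Gamma(n-s/2)$ uniformly in $n$; this follows from the uniform Stirling estimate $|\Gamma(n+a)/\Gamma(n+b)|\leq C_K\,n^{\mathrm{Re}(a-b)}$ valid for $a,b$ in any fixed compact set $K$, applied separately on the bounded ``loop'' portion of the keyhole and on the two horizontal rays, together with the decay of $1/\Gamma(-s/2)$ (respectively $1/\Gamma((1-s)/2)$) as $\mathrm{Re}\,s\to+\infty$ along the rays, which is what ensures convergence of the $s$-integral at infinity on the limiting contour $\Sigma$.
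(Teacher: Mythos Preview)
Your overall strategy---rescale, observe the pointwise limit via Stirling on $\Gamma(n-t/2)/\Gamma(n-s/2)$, then invoke dominated convergence---is exactly the paper's approach, and your treatment of the $t$-integral is essentially fine. The genuine gap is in the $s$-contour bound. Your uniform Stirling estimate $|\Gamma(n+a)/\Gamma(n+b)|\le C_K n^{\mathrm{Re}(a-b)}$ requires $a,b$ in a fixed compact set; it handles the bounded loop portion of your keyhole but says nothing on the horizontal rays, where $b=-s/2$ ranges over an interval of length $\sim n$ (your $R_n>2n-1$). On that portion the $n$-dependent factor $n^{(t-s)/2}\Gamma(n-t/2)/\Gamma(n-s/2)$ is not controlled by anything you have written, and this is precisely the hard step: the paper isolates $f_n(s)=\Gamma(n)\Gamma(-s/2)/(n^{s/2}\Gamma(n-s/2))$ and proves $|f_n(s)|\le C$ uniformly on the rays via a Hankel-contour representation of the Beta function and a direct estimate (cf.\ the argument surrounding their bounds on $\Sigma_r$). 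Without such a device your dominated-convergence majorant on the rays does not exist.

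Two smaller points. First, your claimed ``decay of $1/\Gamma(-s/2)$ as $\mathrm{Re}\,s\to+\infty$'' is backwards: along a horizontal ray at height $\pm\varepsilon$, the reflection formula shows $|\Gamma(-s/2)|$ decays factorially, so $1/\Gamma(-s/2)$ \emph{grows}. What actually gives integrability of the limiting $s$-integrand is the decay of $\Gamma(-s/2)/[\Gamma((s+1)/2)\prod_m\Gamma(\nu_m+s+1)]$ (equivalently, in the paper's notation, of $1/|g(s)|$ after one has bounded $|f_n(s)|$). Second, your exponential decay in the $t$-integral comes from $\prod_m\Gamma(\nu_m+t+1)$, which is empty when $M=0$; in that case only polynomial decay survives and integrability forces the tighter constraint $-1<c<-1/2$ stated in the theorem, a point you have not addressed.
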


\begin{proof}
We only consider the even kernel in Proposition~\ref{prop:kernel-finite} since the odd case is very similar.  After rescaling we rewrite  the integral representation of the even kernel in Proposition~\ref{prop:kernel-finite} as
\begin{align}\label{5.4}
\frac 1{\sqrt n} K_{2n}^\textup{even}(\frac x{\sqrt n},\frac y{\sqrt n})= \frac{1}{2(2\pi i)^2}\int_{c-i\infty}^{c+i\infty} dt\int_\Sigma ds\,\frac{|x|^{s}|y|^{-t-1}}{s-t} \frac{f_{n}(s)}{f_{n}(t)} \frac{g(t)}{g(s)},
\end{align}
 with \begin{equation}
 f_{n}(s)= \frac{\Gamma(n)\Gamma(-\frac s2)}{n^{\frac s2} \Gamma(n-\frac s2)}, \qquad 
 g(s)=\Gamma\big(\frac{s+1}{2}\big) \prod_{m=1}^M \Gamma(\nu_m+s+1). 
 \end{equation}
 
 For any fixed $t\in c+i \mathbb{R}$  and $s \in \Sigma$, using \cite[eq. 5.11.13]{NIST} we see 
\begin{equation}\label{5.6}
f_{n}(s)= \Gamma(-\frac s2) \big(1+O(\frac{1}{n})\big),  \qquad f_{n}(t)= \Gamma(-\frac t2) \big(1+O(\frac{1}{n})\big). \end{equation}
Formally, substituting (\ref{5.6}) in (\ref{5.4}) gives (\ref{hard-even}). To proceed rigorously, we need to verify   a condition for the  exchange of limit and integration.  For this purpose, we  will proceed to 
find two dominated functions respectively  corresponding to  $1/|f_{n}(t)|$  and  $|f_{n}(s)|$.

First,     using \cite[eq. 5.11.13]{NIST} we have for sufficiently  large $n$
\begin{equation}
 \frac{1}{|f_{n}(t)|} \leq  \frac{n^{\frac c2} \Gamma(n-\frac c2) }{ \Gamma(n)|\Gamma(-\frac t2)|} \leq   \frac{2}{|\Gamma(-\frac t2)|},   \quad \forall  t\in c+i \mathbb{R}. \label{t-bound}
\end{equation}

Second, we require an upper bound  for $ | f_{n}(s)  |$. Noting the asymptotic expansion, that as $z\rightarrow \infty$ in the sector $|\mathrm{arg}(z)|\leq \pi-\delta$ (with $0<\delta<\pi$)
\begin{equation}
\Gamma(z)=e^{-z}z^{z-\frac{1}{2}}\sqrt{2\pi} \big(1+O(\frac{1}{z})\big), \label{Agamma}
\end{equation} 
it is easy to see that for a given $y_0>0$ we can choose the contour $\Sigma=\Sigma_{l}\cup \Sigma_{r}$ with \begin{equation} \Sigma_{l}=\big\{\frac{c}{2}+iy:|y|\leq y_{0}\big\}\cup \big\{x\pm iy_0: \frac{c}{2} \leq x\leq 1\big\}, \quad \Sigma_{r}=  \big\{x\pm iy_0:   x> 1\big\}.\end{equation}
Thus, we get from \eqref{Agamma} and the boundedness of $\Gamma(-s/2)$ over $\Sigma_l$ that  for large $n$ there exists a constant $C_1=C_1(y_0) > 0$ such that 
\begin{equation}
 |f_{n}(s)| \leq C_1, \qquad \forall s\in  \Sigma_{l}. \label{upl}
\end{equation}
In order to estimate $f_{n}(s)$ with $s\in  \Sigma_{r}$, we use the integral representation  
 \begin{equation}
  f_{n}(s)=   \frac{n^{-\frac{s}{2}}}{ 2i \sin \frac{\pi s}{2}}    \int_{\mathcal{C}_0} (1-u)^{n-1}(-u)^{-\frac{s}{2}-1}du,
\end{equation}
where  $\mathcal{C}_0$ is a counter-clockwise path which begins and ends at $1$ and encircles the origin once; see e.g.  \cite[eq. 5.12.10]{NIST}. Note that we choose $(-u)^{-1-s/2}=e^{-(1+s/2)\log(-u)}$ with $-\pi<\mathrm{arg}(-u)<\pi$. Change $u$ by $u/n$ and deform the resulting contour into the path which starts from $n$, proceeds along the (upper) real axis to 1, describes a circle of radius one counter-clock round the origin and returns to $n$ along the (lower) real axis. That is, 
 \begin{equation}
  f_{n}(s)=   \frac{1}{ 2i \sin \frac{\pi s}{2}}    \int_{\mathcal{C}} (1-\frac{u}{n})^{n-1}(-u)^{-\frac{s}{2}-1}du.
\end{equation}
Let $s=v\pm iy_0, v>1$. On the unit circle of the $u$-integral above  write $-u=e^{i\theta}$. Then we easily obtain  for $n\geq1$
 \begin{equation}
  |f_{n}(s)|\leq    \frac{1}{ 2 |\sin \frac{\pi s}{2}|}    \int_{-\pi}^{\pi} \big(1+\frac{1}{n}\big)^{n-1} |e^{-(\frac{s}{2}+1)i\theta}|d\theta\leq 
  \frac{\pi e^{ 1+\frac{\pi y_{0}}{2}}}{   |\sin \frac{\pi s}{2}|}. \label{ub1}
\end{equation}
On the upper and lower  real axis,  we have 
 \begin{align}
  |f_{n}(s)|&\leq    \frac{1}{ 2 |\sin \frac{\pi s}{2}|}    \int_{1}^{n} \big(1-\frac{u}{n}\big)^{n-1}|u^{-\frac{s}{2}-1} e^{-(\frac{s}{2}+1)(\mp i\pi)}|du \nonumber \\
  &\leq \frac{1}{ 2 |\sin \frac{\pi s}{2}|}    \int_{1}^{n}   u^{-\frac{v}{2}-1} e^{\frac{1}{2}\pi y_{0}} du \nonumber\\
 & =\frac{1}{  |\sin \frac{\pi s}{2}|}  e^{\frac{1}{2}\pi y_{0}}    \frac{1- n^{-\frac{v}{2}} }{v}\leq \frac{1}{  |\sin \frac{\pi s}{2}|}  e^{\frac{1}{2}\pi y_{0}}.  \label{ub2}
\end{align} 
Using the simple fact $|\sin \frac{\pi s}{2}| \geq |\sinh \frac{\pi }{2} \mathrm{Im}(s)|$, combination of \eqref{ub1} and \eqref{ub2} shows that 
there exists a constant $C_2=C_2(y_0)>0$ such that 
\begin{equation}
 |f_{n}(s)| \leq C_2, \qquad \forall s\in  \Sigma_{r}.
\end{equation}
Together with \eqref{upl} this gives us a bound $C>0$, that is, for large $n$  
\begin{equation}
 |f_{n}(s)| \leq C, \qquad \forall s\in  \Sigma. \label{s-bound}
\end{equation}

Finally, use \eqref{Agamma} and  the  asymptotic formula  that as $y\rightarrow \pm \infty$ 
\begin{equation}
| \Gamma(x + iy) |\sim  \sqrt{2\pi} |y|^{x-\frac{1}{2}}  e^{-\frac{1}{2}\pi |y|}
\end{equation}
with bounded real value of $x$ (see \cite[eq. 5.11.9]{NIST}),  it is easy to conclude  that the  function  of  variables $s$ and $t$
 \begin{equation}
 \,\frac{||x|^{s}|y|^{-t-1}|}{|s-t|}   \frac{2}{|\Gamma(-\frac t2)|}   \frac{|g(t)|}{|g(s)|},  
 \end{equation}
  is integrable along the chosen contours, 
 whenever $-1<c<-1/2$. Here  we emphasize that the assumption $-1/2\leq c<0$  does  not ensure the convergence in the special case $M=0$ while for  $M\geq 1$ it can be relaxed to $-1<c<0$ as in Proposition~\ref{prop:kernel-finite}.
 
With this, combing  \eqref{t-bound} and \eqref{s-bound}, we have indeed justified  the interchange of limit and integrals for every $M$ by the dominated convergence theorem, which completes the proof.
\end{proof}


For a comparison with other known results, it is useful to rewrite the hard edge correlation function of Theorem~\ref{thm:hard} in terms of Meijer $G$-functions.
Using that
\begin{equation}
\int_0^1du\,u^{s-t-1}=\frac1{s-t},
\end{equation}
we see that the even and odd kernel can be written as
\begin{align}
K^\textup{even}(|x|,|y|)=\frac{|y|}{2^{2M}}\int_0^1du\, \nonumber
&\MeijerG{1}{0}{0}{2M+2}{-}{-\frac{\nu_0}2,-\frac{\nu_0}2+\frac12,\ldots,-\frac{\nu_M}2,-\frac{\nu_M}2+\frac12}{\frac{x^2}{2^{2M}}u} \\
&\times\MeijerG{2M+1}{0}{0}{2M+2}{-}{\frac{\nu_M}2-\frac12,\frac{\nu_M}2,\ldots,\frac{\nu_0}2-\frac12,\frac{\nu_0}2}{\frac{y^2}{2^{2M}}u}, \\
K^\textup{odd}(|x|,|y|)=\frac{|x|}{2^{2M}}\int_0^1du\, \nonumber
&\MeijerG{1}{0}{0}{2M+2}{-}{-\frac{\nu_0}2,-\frac{\nu_0}2-\frac12,\ldots,-\frac{\nu_M}2,-\frac{\nu_M}2-\frac12}{\frac{x^2}{2^{2M}}u} \\
&\times\MeijerG{2M+1}{0}{0}{2M+2}{-}{\frac{\nu_M}2+\frac12,\frac{\nu_M}2,\ldots,\frac{\nu_0}2+\frac12,\frac{\nu_0}2}{\frac{y^2}{2^{2M}}u},
\end{align}
respectively. We recall that the so-called Meijer $G$-kernel is given by~\cite{KZ14}
\begin{equation}
K_\text{Meijer}^M(x,y)=\int_0^1du\,\MeijerG{1}{0}{0}{M+1}{-}{-\nu_0,\ldots,-\nu_M}{xu}\MeijerG{M}{0}{0}{M+1}{-}{\nu_M,\ldots,\nu_0}{yu} \label{G-kernel}
\end{equation}
with $x,y>0$. We note that this kernel is single-sided ($x,y\in\mathbb R_+$) while the kernel from Theorem~\ref{thm:hard} is double-sided ($x,y\in\mathbb R\setminus\{0\}$). However, it is also evident that our new kernel may be re-expressed in terms of the Meijer $G$-kernel. We have
\begin{equation}
K^\textup{even}(|x|,|y|)=\frac{|y|}{2^{2M}}K_\text{Meijer}^{2M+1}\Big(\frac{x^2}{2^{2M}},\frac{y^2}{2^{2M}}\Big)
\quad\text{and}\quad
K^\textup{odd}(|x|,|y|)=\frac{|x|}{2^{2M}}K_\text{Meijer}^{2M+1}\Big(\frac{x^2}{2^{2M}},\frac{y^2}{2^{2M}}\Big)
\end{equation}
with
\begin{equation}
\{\nu_m\}_{m=0}^M\mapsto\{{\nu_m}/2,({\nu_m}-1)/2\}_{m=0}^M
\qquad\text{and}\qquad
\{\nu_m\}_{m=0}^M\mapsto\{{\nu_m}/2,({\nu_m}+1)/2\}_{m=0}^M,
\end{equation}
respectively. Thus, the random product matrix~\eqref{W1} provides yet another appearance of the Meijer $G$-kernel; albeit this time in a double-sided version. For graphical representation of the Meijer $G$-kernel we refer to~\cite[Fig.~3.2]{Ip15}, which shows plots of the local density (i.e. the kernel with $x=y$) for different values of $M$.

A double-side hard edge scaling limit near the origin is also present in the Hermite Muttalib--Borodin ensemble.
In this case the kernel is found to be~\cite{Bo98}
\begin{equation}\label{kernel-borodin}
K^\text{even}(x,y)=K^{(\frac{\alpha-1}2,\theta)}(x^2,y^2)
\qquad\text{and}\qquad
K^\text{odd}(x,y)=\sgn(xy)|x|^\theta|y|\,K^{(\frac{\alpha+\theta}2,\theta)}(x^2,y^2)
\end{equation}
where
\begin{equation}\label{kernel-wright-bessel}
K^{(\alpha,\theta)}(x,y)=\theta\int_0^1du(xu)^\alpha J_{\frac{\alpha+1}\theta,\frac1\theta}(xu)J_{\alpha+1,\theta}((yu)^\theta)
\end{equation}
with $J_{a,b}(x)$ denoting Wright's Bessel function. In the case relevant to us~\eqref{MB-Hermite}, we also have $\theta=2M+1$. Furthermore, it is known from~\cite{KS14} that the kernel~\eqref{kernel-wright-bessel} is a Meijer $G$-kernel whenever $\theta$ is a positive integer. In particular, we have
\begin{equation}
\Big(\frac{x^2}{2^{2M}}\Big)^{\frac{1}{2M+1}-1}
K^{(\alpha,2M+1)}\Big((2M+1)\Big(\frac{x^2}{2^{2M}}\Big)^{\frac{1}{2M+1}},(2M+1)\Big(\frac{y^2}{2^{2M}}\Big)^{\frac{1}{2M+1}}\Big)
=K_\text{Meijer}^{2M+1}\Big(\frac{y^2}{2^{2M}},\frac{x^2}{2^{2M}}\Big),
\end{equation}
where the Meijer $G$-kernel on the right-hand side has indices
\begin{equation}
\nu_m=\frac{\alpha+m-1}{2M+1},\qquad m=1,\ldots,2M+1,
\end{equation}
and as always $\nu_0=0$. It follows from~\eqref{kernel-borodin} and~\eqref{kernel-wright-bessel} that the hard edge correlations for the Hermite Muttalib--Borodin ensemble with appropriately chosen parameters may be expressed in terms of the Meijer $G$-kernel in a similar fashion as done for the product ensemble above. We note that the choice of variables in~\eqref{kernel-wright-bessel} should be compared to the change of variables~\eqref{change} performed in the derivation of the asymptotic reduction~\eqref{hermite-MB}.

It is worth verifying consistency of the simplest scenario of $M=0$.
When  $M=0$ our matrix ensemble~\eqref{W1} reduces to the GUE, hence the kernel given by Theorem~\ref{thm:hard} must reduce to the sine kernel for $M=0$. To see this, we use 
\begin{equation}
\MeijerG{1}{0}{0}{2}{-}{0,\frac12}{\frac{x^2}{4}}=\frac{\cos x}{\sqrt\pi}
\qquad\text{and}\qquad
\MeijerG{1}{0}{0}{2}{-}{\frac12,0}{\frac{x^2}{4}}=\frac{\sin |x|}{\sqrt\pi}.
\end{equation}
It follows that
\begin{align}
K^\textup{even}(x,y)&=\frac{1}{\pi}\int_0^1\frac{du}{\sqrt u}\cos(2x\sqrt u)\cos(2y\sqrt u)
=\frac1{\pi}\Big(\frac{\sin 2(x-y)}{2(x-y)}+\frac{\sin 2(x+y)}{2(x+y)}\Big), \\
K^\textup{odd}(x,y)&=\frac{1}{\pi}\int_0^1\frac{du}{\sqrt u}\sin(2x\sqrt u)\,\sin(2y\sqrt u)
=\frac1{\pi}\Big(\frac{\sin 2(x-y)}{2(x-y)}-\frac{\sin 2(x+y)}{2(x+y)}\Big),
\end{align}
which upon insertion into~\eqref{hard-limit} indeed reproduces the sine kernel.

In the end of this section, let us emphasize that there  also exists a contour integral  representation of the limiting kernel in Theorem~\ref{thm:hard} which combines the odd and even into a single formula.

\begin{prop}\label{prop:kernelrep2} With the same  notation as  in Theorem \ref{thm:hard}, the limiting kernel at the origin can be rewritten as
\begin{align}
 K^\textup{even}(x,y)+ K^\textup{odd}(x,y)=2\, \mathcal{K}_{\nu_{1},\ldots,\nu_{M}}(2x,2y), \label{equivalence}
 \end{align}
where the kernel on the right-hand side is defined as
\begin{align}
\mathcal{K}_{\nu_{1},\ldots,\nu_{M}}(x,y)&
=\int_{C_{R}} \frac{dv}{2\pi i}
\,\MeijerG{1}{0}{0}{M+1}{-}{0,-\nu_1, \ldots,-\nu_M}{-\sgn(y)xv}\MeijerG{M+1}{0}{0}{M+1}{-}{0, \nu_1,\ldots,\nu_M}{|y|v},
\label{doubleG-kernel}
\end{align}
with $C_{R}$ denoting a path in the right-half plane from $-i$ to $i$.
\end{prop}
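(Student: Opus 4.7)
The plan is to expand $\mathcal{K}(2x,2y)$ as a triple integral via Mellin--Barnes representations of the two Meijer $G$-functions in~\eqref{doubleG-kernel}, and collapse the $v$-integration. Parameterising $C_R$ as $v=e^{i\theta}$ with $\theta\in[-\pi/2,\pi/2]$ gives the elementary identity
\[
\int_{C_R}\frac{dv}{2\pi i}\,v^{u+w}=\frac{\sin\bigl(\pi(u+w+1)/2\bigr)}{\pi(u+w+1)};
\]
interchanging orders of integration (justified by the Stirling-type bounds used in the proof of Theorem~\ref{thm:hard}) and making the substitution $u=s$, $w=-t-1$ aligns $u+w+1=s-t$ with the denominator $1/(s-t)$ appearing in Theorem~\ref{thm:hard}.

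With this in hand, identity~\eqref{equivalence} reduces to an explicit gamma-function simplification. Writing the combined numerator of $K^\textup{even}+K^\textup{odd}$ as $A_E+\sgn(xy)\,A_O$, where
\[
A_E(s,t)=\frac{\Gamma(-\tfrac s2)\Gamma(\tfrac{t+1}{2})}{\Gamma(-\tfrac t2)\Gamma(\tfrac{s+1}{2})},\qquad A_O(s,t)=\frac{\Gamma(\tfrac{1-s}{2})\Gamma(\tfrac{t+2}{2})}{\Gamma(\tfrac{1-t}{2})\Gamma(\tfrac{s+2}{2})},
\]
and applying Euler's reflection formula to the four pairings $\Gamma(\tfrac{t+1}{2})\Gamma(\tfrac{1-t}{2})$, $\Gamma(-\tfrac{s}{2})\Gamma(\tfrac{s+2}{2})$, $\Gamma(\tfrac{1-s}{2})\Gamma(\tfrac{s+1}{2})$, $\Gamma(-\tfrac{t}{2})\Gamma(\tfrac{t+2}{2})$ together with Legendre's duplication formula on $\Gamma(\tfrac{s+1}{2})\Gamma(\tfrac{s+2}{2})$ and $\Gamma(-\tfrac{t}{2})\Gamma(\tfrac{1-t}{2})$, one obtains
\[
A_E+\epsilon A_O=-\frac{2^{s-t+1}}{\pi}\,\Gamma(-s)\Gamma(t+1)\sin\!\Bigl(\frac{\pi t}{2}+\frac{\epsilon\pi s}{2}\Bigr),\qquad \epsilon:=\sgn(xy),
\]
which matches the $v$-integral output verbatim when $\epsilon=-1$ (using $\sin(\pi t/2-\pi s/2)=-\sin(\pi(s-t)/2)$).

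The main obstacle is the case $\epsilon=+1$: then $-\sgn(y)\cdot 2x<0$, so the argument of the first Meijer $G$-function lies in the left half plane for $v\in C_R$, and the principal branch of $(-\sgn(y)\cdot 2xv)^u$ jumps between the upper and lower halves of $C_R$. Splitting $C_R$ into its upper and lower halves and using $(-v)^u=v^u e^{\mp i\pi u}$ on each piece, the $v$-integral picks up an additional term proportional to $\sin(\pi s)$ beyond the expected $\sin(\pi(s+t)/2)$ contribution. The crucial observation is that $\Gamma(-s)\sin(\pi s)=-\pi/\Gamma(s+1)$ is entire, so the spurious contribution to the $s$-integrand is analytic throughout the interior of $\Sigma$ (the would-be pole at $s=t$ is excluded by $\operatorname{Re}(t)=c\in(-1,-\tfrac12)$ combined with $\operatorname{Re}(s)>c$ on $\Sigma$), and hence integrates to zero by Cauchy's theorem. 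Tracking this branch-dependent cancellation is the main technical work; once it is in place, the identity~\eqref{equivalence} holds for both signs of $\sgn(xy)$.
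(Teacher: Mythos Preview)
Your proposal is correct and follows essentially the same route as the paper's proof: both arguments combine Euler reflection and Legendre duplication to collapse the gamma ratios in $K^{\textup{even}}+K^{\textup{odd}}$ to the form $\Gamma(-s)\Gamma(t+1)\sin(\cdot)$, both use the elementary evaluation of $\int_{C_R} v^{s-t-1}\,dv$ to trade $\sin(\pi(s-t)/2)/(s-t)$ for the $v$-integral, and in the case $\sgn(xy)=+1$ both dispose of the surplus term by observing that it is pole-free in $s$ inside $\Sigma$ (you phrase this via a branch split producing a $\sin(\pi s)$ factor that cancels the poles of $\Gamma(-s)$; the paper phrases it via the algebraic decomposition $e^{i\pi s}g(s,t)=(\sin/\sin-\cos/\cos)+2e^{i\pi(s+t)/2}$, whose second summand is manifestly entire in $s$). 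The only cosmetic difference is direction: you start from $\mathcal{K}$ and unfold it, whereas the paper starts from $K^{\textup{even}}+K^{\textup{odd}}$ and folds it up.
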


\begin{proof}
Using  Euler's reflection formula and duplication formula for the gamma function, we see that
\begin{equation*}
K^\textup{even}(x,y)+ K^\textup{odd}(x,y)=\frac{1}{(2\pi i)^2}\int dt\int ds\, (2|x|)^{s}(2|y|)^{-t-1}
\frac{ g(s,t)}{s-t} \frac{\Gamma(t+1)}{\Gamma(s+1)}
 \prod_{m=1}^M\frac{\Gamma(\nu_m+t+1)}{\Gamma(\nu_m+s+1)},
\end{equation*}
where
\begin{equation}
g(s,t)=\frac{\sin\frac{\pi}{2}t}{\sin\frac{\pi}{2}s}+\sgn(xy) \frac{\cos\frac{\pi}{2}t}{\cos\frac{\pi}{2}s}.
\end{equation}
In order to proceed, we will consider the cases $xy<0$ and $xy>0$ separately.
For $xy<0$,  it is seen that
\begin{equation}
g(s,t)= \frac{2}{\sin\pi s}\sin\frac{\pi}{2}(t-s)=-\frac{2}{\pi} \Gamma(-s)\Gamma(1+s) \, \sin\frac{\pi}{2}(t-s).
\end{equation}
Now~\eqref{equivalence} can be obtained using the integral representation
\begin{equation}
 \frac{1}{\pi i} \int_{C_{R}}dv \,v^{s-t-1}= \frac{1}{t-s}\sin\frac{\pi}{2}(t-s),
\end{equation}
with the contour $C_R$ as above,
together with the definition of Meijer $G$-function. For $xy>0$, we note that
\begin{equation}
e^{i\pi s}g(s,t)=
\left(\frac{\sin\frac{\pi}{2}t}{\sin\frac{\pi}{2}s}-\frac{\cos\frac{\pi}{2}t}{\cos\frac{\pi}{2}s} \right)+2e^{i\frac{\pi}{2}(t+s)}.
\end{equation}
The $s$-variable integrand in the second part has no pole within the contour $\Sigma$. Thus, the problem reduces to the proven situation.
\end{proof}

The simplest non-trivial case is $M=1$. Here, we get
 \begin{equation}
\mathcal{K}_{\nu}(x,y)
=\left(\frac{y}{x}\right)^{\nu/2}  \frac{1}{\pi i} \int_{C_{R}}dv
\, I_{\nu}\big(2\sqrt{\sgn(y)xv}\big)\,  K_{\nu}\big(2\sqrt{|y|v}\big),  \label{doubleM1-kernel}
\end{equation}
with the modified Bessel functions $I_{\nu}$ and $K_{\nu}$, which follows immediately from the fact that
 \begin{align}
\MeijerG{1}{0}{0}{2}{-}{0,-\nu}{-z}=z^{-\nu/2} I_{\nu}(2\sqrt{z}), \qquad  \MeijerG{2}{0}{0}{2}{-}{\nu,0}{z}=2 z^{\nu/2} K_{\nu}(2\sqrt{z}).
\end{align}

\section{Global spectra in product and Muttalib--Borodin ensembles}
\label{sec:global}

The study of the scaling limit  at  the origin  in the previous section introduces a scale in which the average spacing between eigenvalues is of order unity. A very different, but still well-defined, limiting process is the so-called global scaling regime. In this regime the average spacing between eigenvalues tends to zero in such way that the spectral density tends to a quantity $\rho(x)$ with compact support $I\subset\mathbb R$ and $\int_I \rho(x)dx=1$. Here $\rho(x)$ is referred to as the global density.
Throughout this section, the indices $\nu_1\ldots,\nu_M$ are kept fixed.

For the Laguerre Muttalib--Borodin ensemble specified by the density~\eqref{MB-laguerre} the global scaling limit corresponds to a change of variables $x_j\mapsto nx_j$. Introducing the further change of variables $x_j\mapsto Mx_j^M$, the global density is known to be the so-called Fuss--Catalan density with parameter $M$ \cite{FW15}. It can be specified by the moment sequence
\begin{equation}
\text{FC}_M(k)=\frac1{Mk+1}\binom{(M+1)k}k,\qquad k=0,1,\ldots\,.
\end{equation}
These are the Fuss--Catalan numbers (the Catalan numbers are the case $M=1$).

Now, consider the product of $M$ standard complex Gaussian random matrices. Consistent with the discussion in Section~\ref{sec:motivation}, the corresponding global density is again the Fuss--Catalan density with parameter $M$~\cite{Mu02,AGT10,BBCC11,NS06}.

It is known that the Fuss--Catalan density, $\rho^{(M)}_\text{FC}(x)$ say, can also be characterised as the minimiser of the energy functional
\begin{equation}\label{energy-laguerre}
E[\rho]=M\int_0^Ldx\,\rho(x)x^{\frac1M}-\frac{1}{2}\int_0^Ldx\int_0^Ldy\,\rho(x)\rho(y)\log\big(|x-y||x^{\frac1M}-y^{\frac1M}|\big)
\end{equation}
with $L=(M+1)^{M+1}/M^M$; see~\cite{CR14,FL15,FLZ15}. Note that the energy functional~\eqref{energy-laguerre} relates to~\eqref{MB-laguerre} through the aforementioned change of variables. Similarly, the energy functional corresponding to~\eqref{MB-Hermite} is
\begin{align}
\tilde E[\tilde\rho]&=\theta\int_{-\tilde L}^{\tilde L}dx\,\rho(x)x^{\frac2\theta}
-\frac{1}{2}\int_{-\tilde L}^{\tilde L}dx\int_{-\tilde L}^{\tilde L}dy\,
\tilde\rho(x)\tilde\rho(y)\log\big(|x-y||\sgn x|x|^{\frac1\theta}-\sgn y|y|^{\frac1\theta}|\big) \nonumber \\
&=2\theta\int_{0}^{\tilde L}dx\,\rho(x)x^{\frac2\theta}-\int_{0}^{\tilde L}dx\int_{0}^{\tilde L}dy\,
\tilde\rho(x)\tilde\rho(y)\log\big(|x^2-y^2||(x^2)^{\frac1\theta}-(y^2)^{\frac1\theta}|\big)
\label{energy-hermite}
\end{align}
with $\theta=2M+1$.
We note that changing variables $x^2\mapsto x$ and $y^2\mapsto y$, then setting $\tilde\rho(x)=x\rho(x^2)$ reduces~\eqref{energy-hermite} to~\eqref{energy-laguerre} with $L=\tilde L^2$. Thus, the minimiser in~\eqref{energy-hermite} is given in terms of the Fuss-Catalan density
\begin{equation}\label{double-sided-FC}
\tilde\rho(x)=|x|\rho_\text{FC}^{(M)}(x^2)
\end{equation}
and is symmetric about the origin.

As an illustration, let us consider the simplest case, $M=1$. The Fuss--Catalan density becomes the celebrated Mar\v cenko--Pastur density,
\begin{equation}
\rho_\text{FC}^{(M=1)}(x)=\frac{1}{2\pi}\sqrt{\frac{4-x}{x}},\qquad 0<x<4.
\end{equation}
The formula~\eqref{double-sided-FC} then gives the standard result (see e.g.~\cite{PSbook}) that the energy functional
\begin{equation}
 \tilde E[\tilde\rho]=\int_{-2}^{2}dx\,\tilde \rho(x)x^{2}
-\int_{-2}^{2}dx\int_{-2}^{2}dy\,\tilde\rho(x)\tilde\rho(y)\log|x-y|
\end{equation}
is minimised by
\begin{equation}
\rho_\text{Wigner}(x)=\frac{\sqrt{4-x^2}}{2\pi},\qquad -2<x<2,
\end{equation}
which is Wigner's semi-circle law.

It has been demonstrated in Section~\ref{sec:product}, that the energy function implicit in~\eqref{energy-hermite} underlies the eigenvalue distribution of the random matrix product~\eqref{W1}. Thus, we can anticipate that after appropriate scaling the global density for the product ensembles is given by~\eqref{double-sided-FC}. A direct proof of this can obtained through a number of different strategies. We consider first a method based on the characteristic polynomial.

In terms of the global scaled variables, the key equation relating the averaged characteristic polynomial to  the global is the asymptotic formula~\cite{FW15}
\begin{equation}\label{asymp-stieltjes}
\frac1n\frac d{dz}\log\big\langle\det(z n^{M+\frac12}\mathbb I_n-W_M)\big\rangle=\tilde G_M(z)+O(n^{-1}),
\end{equation}
where $\tilde G_M(z)$ is the Stieltjes transform of the global spectral density,
\begin{equation}\label{stieltjes}
\tilde G_M(z)=\int_{-\tilde L}^{\tilde L}dx\,\frac{\tilde\rho(x)}{z-x}.
\end{equation}
Following the strategy first used in~\cite{FL15}, the formula~\eqref{asymp-stieltjes} leads to a characterisation of the Stieltjes transform~\eqref{stieltjes}, upon realising that the characteristic polynomial satisfy a linear differential equation.

\begin{prop}
Consider a matrix product~\eqref{W1} with even matrix dimension, $n=2N$. Let
\begin{equation}\label{char-poly}
f(z)=\big\langle\det(z\mathbb I_{n}-W_M)\big\rangle
\end{equation}
denote the characteristic polynomial. Then~\eqref{char-poly} is a solution to the $(2M+2)$-th differential equation,
\begin{equation}\label{diff}
2z^2\Big(z\frac d{dz}-n\Big)f(z)
=\prod_{m=0}^M\Big(z\frac d{dz}+\nu_m\Big)\Big(z\frac d{dz}+\nu_m-1\Big)f(z),
\end{equation}
with asymptotic boundary condition $f(z)\sim z^n$ for $|z|\to \infty$.
\end{prop}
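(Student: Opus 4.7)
The plan is to convert the ODE~\eqref{diff} into an algebraic identity by working in the Mellin--Barnes representation already established for the bi-orthogonal polynomials. Using the Remark following Proposition~\ref{thm:bi-func-sum}, $f(z)$ coincides with the monic polynomial $p_n(z)$; for $n = 2N$, the contour integral formula~\eqref{p2n-int} of Proposition~\ref{prop:bi-func-int} gives, up to an $s$-independent constant,
$$f(z) \;\propto\; \frac{1}{2\pi i}\oint_\Sigma ds\, z^{2s}\, G(s), \qquad G(s) = \frac{\Gamma(-s)}{\Gamma(N+1-s)\,\Gamma(s+\tfrac12)}\,\prod_{m=1}^{M}\frac{1}{\Gamma(\nu_m + 2s + 1)}.$$

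The Euler operator $\vartheta := z\, d/dz$ acts diagonally on monomials as $\vartheta z^{2s} = 2s\, z^{2s}$, so both sides of~\eqref{diff} collapse, inside the integral, to multiplication of $G(s)$ by a rational function of $s$. The right-hand side of~\eqref{diff} becomes multiplication by $\prod_{m=0}^{M}(2s+\nu_m)(2s+\nu_m-1)$, which (with $\nu_0=0$) combines with the denominator gammas in $G(s)$ via $\Gamma(\nu_m+2s+1) = (2s+\nu_m)(2s+\nu_m-1)\Gamma(\nu_m+2s-1)$. The left-hand side becomes multiplication by $4(s-N)z^{2s+2}$; the factor $z^{2s+2}$ is absorbed by shifting $s\mapsto s-1$ in the integration variable, converting the integrand into $4(s-1-N)\,G(s-1)\, z^{2s}$ on the translated contour $\Sigma + 1$. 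The key computation is then the functional identity
$$4(s-1-N)\,G(s-1) \;=\; G(s)\,\prod_{m=0}^{M}(2s+\nu_m)(2s+\nu_m-1),$$
which follows from the elementary recursions $\Gamma(1-s) = -s\,\Gamma(-s)$, $\Gamma(N+2-s) = (N+1-s)\,\Gamma(N+1-s)$ and $\Gamma(s+\tfrac12) = (s-\tfrac12)\,\Gamma(s-\tfrac12)$: the prefactor cancels against $N+1-s$, and what remains, $-s(s-\tfrac12) = -\tfrac14\cdot 2s(2s-1)$ together with the product over $m\ge 1$, is exactly the right-hand multiplier.

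The remaining task is to reconcile the two contours. The shift sends $\Sigma$ (encircling $\{0,\ldots,N\}$) to $\Sigma+1$ (encircling $\{1,\ldots,N+1\}$); the prefactor $s-1-N$ cancels the pole of $\Gamma(1-s)$ at $s=N+1$ on the left, while the zero of $2s(2s-1)$ at $s=0$ cancels the pole of $\Gamma(-s)$ at $s=0$ on the right, so both integrands share the active pole set $\{1,\ldots,N\}$ and the two loops may be deformed to a common contour. The boundary condition $f(z)\sim z^n$ as $|z|\to\infty$ is immediate because $\det(zI_n - W_M)$ is monic of degree $n$. I expect the main obstacle to be purely bookkeeping: marshalling the gamma function shifts and the pole cancellations in a single transparent calculation; no further analytic input is needed beyond the integral representation already furnished by Proposition~\ref{prop:bi-func-int}.
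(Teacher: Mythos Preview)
Your argument is correct and, at its core, follows the same route as the paper: both proofs identify $f(z)$ with the bi-orthogonal polynomial $p_{2N}(z)$ and then exploit its special-function representation. The paper simply quotes the Meijer $G$-function form~\eqref{p_2n-meijer} and invokes the standard ODE satisfied by any Meijer $G$-function, whereas you work directly with the Mellin--Barnes integral~\eqref{p2n-int} and verify the ODE by hand via gamma recursions and a contour shift. Since the textbook derivation of the Meijer $G$-function ODE proceeds by exactly this Mellin--Barnes manipulation, your proof is essentially the paper's one-line citation unpacked; the benefit is self-containment (no appeal to Meijer $G$-function lore), the cost is the bookkeeping you anticipated. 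One cosmetic point: the integral~\eqref{p2n-int} is written with $|x|^{2s}$, but as $p_{2N}$ is a polynomial in $x^2$ with residues only at integer $s$, replacing $|x|^{2s}$ by $z^{2s}$ is harmless --- you might state this explicitly.
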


\begin{proof}
The characteristic polynomial~\eqref{char-poly} is identical to the bi-orthogonal polynomial $p_{2N}(z)$. As shown earlier, this polynomial is proportional to a Meijer $G$-function~\eqref{p_2n-meijer}. It is well-known that such Meijer $G$-functions satisfy the differential equation~\eqref{diff}. The asymptotic boundary condition follows trivially, since $f(z)$ is a monic polynomial.
\end{proof}

Changing variables $z\mapsto n^{M+\frac12}\hat z/\sqrt{2}$ in~\eqref{diff} and using that~\cite{FL15}
\begin{equation}
\frac{f^{(k)}(\hat z)}{f(\hat z)}\sim\Big(\frac{f'(\hat z)}{f(\hat z)}\Big)^k
\end{equation}
to leading order in $n$, we see that for large $n$ the differential equation~\eqref{diff} reduces to the algebraic equation (see e.g. \cite{bai07} for $M=1$)
\begin{equation}\label{alg-eq1}
z^2(z\tilde G_M(z)-1)=(z\tilde G_M(z))^{2M+2}
\end{equation}
with asymptotic condition $\tilde G_M(z)\sim 1/z$ as $|z|\to\infty$.
This equation is to be compared to the algebraic equation satisfied by the Stieltjes transform of the Fuss--Catalan density,
\begin{equation}\label{alg-eq2}
z(zG_M(z)-1)=(zG_M(z))^{M+1},
\end{equation}
see e.g.~\cite{FL15}. With $z\mapsto z^2$ and $M\mapsto 2M+1$ and setting $\tilde G_{M}(z)=zG_{2M+1}(z^2)$, we see that~\eqref{alg-eq2} reduces to~\eqref{alg-eq1}. This prescription is equivalent to ~\eqref{double-sided-FC}, thus verifying this formula as the evaluation of the global density.

The same result can also be obtained using free probability techniques. To see this, we need some additional notation. Let $a$ be a non-commutative random variable with distribution $d\mu(x)=\rho(x)dx$. The Stieltjes transform $G_a(z)$ of the variable $a$ is defined analogous to~\eqref{stieltjes}. The $S$-transform is defined as
\begin{equation}
S_a(z)=\frac{1+z}{z}\gamma^{-1}(z)
\qquad\text{with}\qquad
\gamma(z)=-1+z^{-1}G_a(z^{-1}).
\end{equation}
Now assume that $a$ and $b$ are two freely independent non-commutative random variables and that the Stieltjes transform $G_b(z)$ satisfies a functional equation $P(z,G_b(z))=0$.
It is known~\cite{NS06} that under these conditions the Stieltjes transform $G_{ab}(z)$ of the product $ab$ satisfies
\begin{equation}\label{functional-recursion}
P\Big(zS_a(zG_{ab}(z)-1),\frac{zG_{ab}(z)}{S_a(zG_{ab}(z)-1)}\Big)=0.
\end{equation}
Moreover, we know that if $a$ is given by the free normal distribution (i.e. Wigner's semi-circle) and $b$ is given by the free Poisson distribution (i.e. Mar\v cenko--Pastur), then
\begin{equation}
S_a(z)=\frac1{1+z} \qquad\text{and}\qquad G_{b}(z)^2-zG_b(z)+1=0.
\end{equation}
We can now use that the limiting distributions for the GUE and the Wishart ensemble are the free normal and the free Poisson, respectively. Thus, using~\eqref{functional-recursion} $M$ times, we see that our product~\eqref{W1} indeed gives rise to the the functional equation~\eqref{alg-eq1}.

It is also possible to construct a parametrisation of the global density in terms of elementary functions based on the polynomial equation \eqref{alg-eq1}. With
\begin{equation}
x_{0}^{2}=\frac{\big(\sin((2M+2)\varphi)\big)^{2M+2}}{\sin\varphi\,\big(\sin((2M+1)\varphi)\big)^{2M+1}},
\qquad  0\leq\varphi\leq\frac{\pi}{2M+2},
\end{equation}
we have
\begin{equation}
\tilde\rho(x_0)
=\frac{1}{\pi}\sqrt{\frac{\sin\varphi}{\sin(2M+ 1)\varphi}}\left(\frac{\sin(2r+1)\varphi}{\sin(2M+ 2)\varphi}\right)^{ M}\sin\varphi,
\qquad  0\leq\varphi\leq\frac{\pi}{2M+2};
\end{equation}
see e.g.~\cite{FL15}.  We remark that it follows that the singularity at the origin blows up like
\begin{equation}
\tilde\rho(x_0)\sim  \frac{1}{\pi } \sin\frac{\pi}{2M+2} \, |x_0|^{-\frac{M}{M+1}}
\end{equation}
as $x_0\to0$.

\section{Conclusion and outlook}

In this paper, we have shown that it is possible to construct a Hermitised random matrix product for which the eigenvalues form a determinantal point process  on the entire real line with an explicit kernel. This is a fundamental new contribution to the study of random matrix product ensembles, since all previous exactly solvable models of this type have had eigenvalues restricted to the positive half-line. Furthermore, we have argued that this Hermitised product ensemble can be considered a natural generalisation of the classical Hermite ensemble (i.e. GUE) in similar way as the squared singular values of matrix products with  Gaussian matrices~\cite{AKW13,AIK13} and truncated unitary matrices~\cite{KKS15} can be considered generalisations of the Laguerre and Jacobi ensembles, respectively.
To this point, we have shown that the joint eigenvalue PDF reduces asymptotically to the Muttalib--Borodin ensemble of Hermite type.

On another front, we have shown that the local scaling limit near the origin is described by a two-sided generalisation of the so-called Meijer $G$-kernel~\cite{KZ14}. This two-sided kernel reduces to the sine kernel in the simplest case. We have also seen that the global density can be found explicitly and that it is expressed in terms of the so-called Fuss--Catalan distribution in a simple manner. Our result relies on an explicit double contour integral formulation of the correlation kernel (Proposition \ref{prop:kernel-finite}).
It is worth stressing that we could make full use of this double contour integral formulation and give an analytical proof of the global density.  In fact, following almost exactly the same steps introduced in  \cite{LWZ14}, it can be proven that the sine-kernel arises in the bulk 
and that Airy-kernel arises at the soft edge; cf. the proof of Theorems 1.1 and Theorem 1.3 as well as Remark 2 in~\cite{LWZ14}. The full details are beyond the scope of this paper, so let us only mention that a basic starting  point is to approximate the integrand by elementary functions and rewrite the kernel,  say the even part, as
\begin{multline}
\frac{1}{n \tilde\rho(x_0)}\Big( \sqrt{\frac{2}{n}} \Big)^{2M+1}K_{2n}^\text{even}
\bigg(
\Big( \sqrt{\frac{2}{n}} \Big)^{2M+1} \Big(\frac{x_0}{\sqrt{2}}+\frac{x}{ \tilde\rho(x_0)n}\Big), \Big( \sqrt{\frac{2}{n}} \Big)^{2M+1}(\frac{x_0}{\sqrt{2}}+\frac{x}{ \tilde\rho(x_0)n}\Big)\bigg)
\\ \sim \frac{\sqrt{2}}{|x_0|\tilde\rho(x_0)}\frac{1}{(2\pi i)^2}\int dt\int ds\,\frac{e^{n(g(s)-g(t))}}{s-t}  \Big|1+\frac{\sqrt{2}x}{ x_0\tilde\rho(x_0)n}\Big|^{2ns} \Big|1+\frac{\sqrt{2}y}{ x_0\tilde\rho(x_0)n}\Big|^{-2nt-1}
 \frac{h_{n}(s)}{h_{n}(t)},
\end{multline}
where the phase function is given by
\begin{equation}
g(z)=(2M+1)z-2(M+1)z\log z+z\log(z-1)-\log(1-z)+z\log x_{0}^2.
\end{equation}
Hence, the saddle point equation $g'(z)=0$ is exactly  expressed through the equation  \eqref{alg-eq1}.   We stress that the above  parametrisation representation  plays a key role in the proof of the sine-kernel  via the steepest decent method.

Finally, we emphasize that our construction of a Hermitised random product ensemble is based on a matrix transformation which maps  the space of polynomial ensembles onto itself (Theorem~\ref{T1}). This type of matrix transformation are important since they preserve exact solvability. Our  proof of Theorem~\ref{T1} is applicable  to the Hermitised product ensemble multiplied by a Gaussian matrix, crucially with the help of the   hyperbolic HCIZ integral over the pseudo-unitary group. However, it would be interesting to see whether this could be extended to the product ensemble multiplied by   other types of random matrices, say, truncated unitary matrices. For this, a possible way is to first extend the matrix integral formula stated  in \cite[Theorem 2.3]{KKS15} from  the unitary group to  the  pseudo-unitary case, and then perform  the same steps as in Section \ref{sec:fyodorov}. This will be an interesting and  challenging problem for us.

\paragraph{Acknowledgements}
We thank S. Kumar  for  useful discussions on his paper and M. Kieburg for comments on a first draft.
We acknowledge support by the Australian Research Council through grant DP170102028 (PJF),
the ARC Centre of Excellence for Mathematical and Statistical Frontiers (PJF,JRI), and  partially by ERC Advanced Grant \#338804,  the National Natural Science Foundation of China  \#11771417,  the Youth Innovation Promotion Association CAS  \#2017491, the Fundamental Research Funds for the Central Universities \#WK0010450002,   Anhui Provincial Natural Science Foundation \#1708085QA03 (DZL).   D.-Z. Liu is particularly grateful to L\'{a}szl\'{o} Erd\H{o}s  for funding   his one-year stay at  IST Austria.

\appendix

\section{Three different proofs of Theorem 1}

This appendix contains three separate proofs of Theorem~\ref{T1}. Each proof has its own merits and provides a different perspective on the matrix transformation.

The first proof is based on the method of additive rank-one deformations.
A benefit of this method is that it avoids group integrals of HCIZ type, and therefore might be a more suitable starting point for generalisations to studies of real or quaternion matrices.
The second proof is based on a theorem by Forrester and Rains~\cite{FR05} that gives the eigenvalue density of a Hermitised matrix product by means of an inverse double-sided Laplace transform. This idea is closely related to the spherical transforms used in the context of other matrix transformations~\cite{KK16,KR16}.
The third and final proof uses a generalisation of the HCIZ integral previously studied by Fyodorov~\cite{Fy02,FS02}.

The three proofs will be given in the three subsections below. However, before we start it is worth noting the following reduction result related to Theorem~\ref{T1}.

\begin{remark}\label{R1}
Suppose $n = N$, and consider the limit $a_{n_0+1} \to 0^+$. Recalling the ordering (\ref{as2}), inspection of (\ref{as1}) shows that its
leading contribution comes from a Laplace expansion via the top left entry of the second determinant.
This entry is in turn significant only for $x_{n_0+1} \to 0^+$, telling us that to leading order (\ref{as1}) with $n=N$ and
$a_{n_0+1} \to 0^+$ is equal to
\begin{equation}\label{as3}
\frac{e^{-x_{n_0+1}/a_{n_0+1}}}{|a_{n_0+1}|}\prod_{l=1}^N\frac1{(N-l)!}
\!\!\prod_{\substack{l=1\\l\neq n_0+1}}^N\!\! |a_l|^{-2}|x_l|
\!\!\!\prod_{\substack{1\leq j<k\leq N\\j,k\neq n_0+1}}\frac{x_k-x_j}{a_k-a_j}
\det\big[e^{-x_i/a_j}\big]_{i,j=1}^{n_0}\det\big[e^{-x_{i+n_0}/a_{j+n_0}}\big]_{i,j=2}^{N-n_0}.
\end{equation}
After relabelling, (\ref{as3}) is equivalent to (\ref{as1}) with $n=N-1$ times a Dirac delta function
corresponding to an eigenvalue at zero. Repeating this limiting procedure a total of $N - n$ times shows that
(\ref{as1}) in the case $n=N$ reduces to the general $n$ case.
\end{remark}


\subsection{First proof: Recursive structure using additive rank-one deformations}

In this section, we prove Theorem~\ref{T1} by induction. The induction step will be constructed using the method of rank-one deformations.
For the reader's convenience, we start by providing an outline of the proof; details will follow in the steps i), ii), and iii) below.

We first need some additional notation. Let $G^{(p)}$ denote the $p \times N$ matrix consisting of the first $p$ rows of the $n \times N$ complex Gaussian matrix $G$ and let $A^{(p)} = \diag(a_1,\dots, a_p)$. Define
\begin{equation}\label{Xp-def}
X^{(p)}=(G^{(p)})^\dagger A^{(p)} G^{(p)},\qquad p=1,\ldots,n.
\end{equation}
We see that $X^{(p)}$ is an $N\times N$ matrix and that $X^{(n)}=X$.
Moreover, $X^{(p)}$ has rank (less than or equal to) $p$ and we therefore know that it has (at most) $p$ non-zero eigenvalues; we will denote these eigenvalues $\lambda_k^{(p)}$ ($k=1,\ldots,p$).
The crucial observation is that $X^{(p)}$ is an additive rank-one deformation of $X^{(p-1)}$ for $p>1$. More precisely, we have
\begin{equation}\label{1rank-deform}
X^{(p)}=X^{(p-1)}+a_p\,\vec{x}\, \vec{x}^{\,\dagger},
\end{equation}
where $ \vec{x}$ is an $N \times 1$ column vector with standard complex Gaussian entries.
We will see below that if the eigenvalues of $X^{(p-1)}$ are known, then the rank-one deformation~\eqref{1rank-deform} can be used to find the conditional PDF for the eigenvalues of $X^{(p)}$. Let us denote this conditional PDF by
\begin{equation}\label{condition-prob}
Q^{n_0}_{p-1}(\{a_j\}_{j=1}^p;\{\lambda_j^{(p)}\}\,\vert\,\{\lambda_j^{(p-1)}\}),\qquad p=2,3,\ldots,n.
\end{equation}
It is clear that if the PDF $P^{n_0}_1(a_1;\lambda_1^{(1)})$ is known, then $P^{n_0}_p(\{a_j\};\{\lambda_j^{(p)}\})$ can be constructed recursively using
\begin{equation}\label{recursion}
P^{n_0}_{p}(\{a_j\};\{\lambda_j^{(p)}\})=\int_D \prod_{k}d\lambda_k^{(p-1)}\,
P^{n_0}_{p-1}(\{a_j\};\{\lambda_j^{(p-1)}\})\,
Q^{n_0}_{p-1}(\{a_j\};\{\lambda_j^{(p)}\}\,\vert\,\{\lambda_j^{(p-1)}\})
\end{equation}
for $p=1,\ldots,n$ and a suitable integration domain $D$. Thus, our proof can be divided into three steps:
\begin{itemize}
 \item[i)]
 Use the additive rank-one deformation~\eqref{1rank-deform} to find the conditional PDF~\eqref{condition-prob}.
 \item[ii)] Use the conditional PDF~\eqref{condition-prob} together with the recursion~\eqref{recursion} to show that if $P^{n_0}_{p-1}(\{a_j\};\{\lambda_j^{(p-1)}\})$ is given by~\eqref{as1} with $n=p-1$, then $P^{n_0}_{p}(\{a_j\};\{\lambda_j^{(p)}\})$ is given by~\eqref{as1} with $n=p$.
 \item[iii)] Show that $P^{n_0}_{p=1}(a_1;\lambda_1^{(1)})$ is given by~\eqref{as1} with $n=1$.
\end{itemize}
We will look at these three steps separately below.

\paragraph{i)} We want to  find the eigenvalues of $X^{(p)}$ using the rank-one deformation~\eqref{1rank-deform} assuming that the eigenvalues of $X^{(p-1)}$ are known. The matrix $X^{(p-1)}$ has (at most) $p-1$ non-zero eigenvalues. We assume that these eigenvalue are pairwise distinct and ordered as
\begin{equation}\label{9}
-\infty<\lambda_1^{(p-1)}<\cdots<\lambda_{n_0}^{(p-1)}<0< \lambda_{n_0+1}^{(p-1)}<\cdots<\lambda_{p-1}^{(p-1)}<\infty,
\end{equation}
i.e. $X^{(p-1)}$ has $p-n_0-1$ positive eigenvalues if $p>n_0$ and no positive eigenvalues if $p\leq n_0$.

If $p<N$ then the matrix $X^{(p)}$ must have an eigenvalue equal to zero with multiplicity $N-p$ (or higher).
The remaining $p$ eigenvalues will be random variables which are non-zero and have multiplicity one (almost surely),
since the deformation~\eqref{1rank-deform} includes a Gaussian vector $x$.
Thus, we know from~\cite{FR05} that the eigenvalues of $X^{(p)}$ are given as solutions to the secular equation
\begin{equation}\label{6a}
0=1-a_p\bigg(\frac{q_0}{\lambda}+\sum_{j=1}^{p-1}\frac{q_j}{\lambda-\lambda_j^{(p-1)}}\bigg),
\end{equation}
where, with  $\Gamma[\alpha,\beta]$ denoting a gamma-distibuted variable with shape parameter $\alpha$ and rate parameter $\beta$, each $q_j$ is a random variable given by
\begin{equation}
q_0\stackrel{d}{=}\Gamma[N-p+1,1]\qquad\text{or}\qquad
q_j\stackrel{d}{=}\Gamma[1,1]\qquad\text{for}\quad j=1,\dots,p-1.
\end{equation}
Furthermore, the eigenvalues of $X^{(p)}$ must be interlaced with the eigenvalues $X^{(p-1)}$, i.e. interlaced with $\{0\}\cup\{\lambda_k^{(p-1)}\}_{k=1}^{p-1}$. This interlacing may be verified by sketching the plot of the secular equation~\eqref{6a} as a function of $\lambda$. Moreover, we note that whether the interlacing starts from the left or from the right depends on whether $a_p$ is negative or positive, or equivalently on whether $p\leq n_0$ or $p>n_0$, cf.~\eqref{as1}. For $p\leq n_0$ we have the interlacing
\begin{align}\label{domain1}
-\infty<\lambda_1^{(p)}<\lambda_1^{(p-1)}<\lambda_{2}^{(p)}
<\cdots<\lambda_{p-1}^{(p)}<\lambda_{p-1}^{(p-1)}<\lambda_{p}^{(p)}<0,
\end{align}
while for $p>n_0$ we have the interlacing
\begin{align}\label{domain2}
-\infty<\lambda_1^{(p-1)}<\lambda_1^{(p)}<\cdots<\lambda_{n_0}^{(p)}<0
<\lambda_{n_0+1}^{(p)}<\cdots<\lambda_{p-1}^{(p-1)}<\lambda_{p}^{(p)}<\infty.
\end{align}
Subject to these interlacings, we read off from~\cite[Cor.~3]{FR05} that the corresponding
conditional PDF~\eqref{condition-prob} is given by
\begin{multline}\label{7a}
Q^{n_0}_{p-1}(\{a_j\};\{\lambda_j^{(p)}\}\,\vert\,\{\lambda_k^{(p-1)}\})=\\
\frac{1}{|a_p|^N(N-p)!}
\frac{\prod_{i}(\lambda_i^{(p)})^{N-p}\,e^{-\lambda_i^{(p)}/a_p}}
{\prod_{k}(\lambda_k^{(p-1)})^{N-p+1}\,e^{-\lambda_k^{(p-1)}/a_p}}
\frac{\prod_{i<j}(\lambda_j^{(p)}-\lambda_i^{(p)})}
{\prod_{k<\ell}(\lambda_\ell^{(p-1)}-\lambda_k^{(p-1)})}
\end{multline}
with indices $1\leq i,j\leq p$ and $1\leq k,\ell\leq p-1$.

\paragraph{ii)}
We can now turn to the recursive formula~\eqref{recursion}.
The probability density~\eqref{as1} with $n=p-1$ is given by
\begin{multline}
P^{n_0}_{p-1}(\{a_j\};\{\lambda_j^{(p-1)}\})=
\prod_{l=1}^{p-1}\frac1{|a_l|}\frac{(\lambda^{(p-1)}_l/a_l)^{N-p+1}}{(N-l)!}
\prod_{1\leq j<k\leq p-1}\frac{\lambda^{(p-1)}_k-\lambda^{(p-1)}_j}{a_k-a_j}\\
\times\det\big[e^{-\lambda^{(p-1)}_i/a_j}\big]_{i,j=1}^{n_0}
\det\big[e^{-\lambda^{(p-1)}_{i}/a_{j}}\big]_{i,j=n_0+1}^{p-1}.
\end{multline}
Furthermore, we know from step~i) that the conditional PDF is given by~\eqref{7a}
and that the integration domain is given by either~\eqref{domain1} or~\eqref{domain2} depending on whether $p\leq n_0$ or $p>n_0$. Considering the case $p\leq n_0$, we have the recursion
\begin{multline}\label{proof-recursion}
\int \prod_{k}d\lambda_k^{(p-1)}\,
P^{n_0}_{p-1}(\{a_j\};\{\lambda_j^{(p-1)}\})\,
Q^{n_0}_{p-1}(\{a_j\};\{\lambda_j^{(p)}\}\,\vert\,\{\lambda_j^{(p-1)}\})=\\
a_p^{-p+1}\prod_{l=1}^{p}\frac1{|a_l|}\frac{(\lambda_l^{(p)}/a_l)^{N-p}e^{-\lambda_i^{(p)}/a_p}}{(N-l)!}\prod_{l=1}^{p-1}\frac1{a_l}
\prod_{1\leq j<k\leq p-1}\frac{1}{a_k-a_j}
\prod_{i<j}(\lambda_j^{(p)}-\lambda_i^{(p)})\\
\times\int \prod_{k}d\lambda_k^{(p-1)}\,
\det\big[e^{-\lambda^{(p-1)}_i(1/a_j-1/a_p)}\big]_{i,j=1}^{p-1}.
\end{multline}
We note that there is only one determinant since $p<n_0$.
Let us focus on the integral on the last line in~\eqref{proof-recursion}. We see that
\begin{align}\label{proof-det}
\int \prod_{k}d\lambda_k^{(p-1)}\,
\det\big[e^{-\lambda^{(p-1)}_i(a_j^{-1}-a_p^{-1})}\big]_{i,j=1}^{p-1}
=&\det\bigg[\int_{\lambda_{i}^{(p)}}^{\lambda_{i+1}^{(p)}}e^{-x(a_{j}^{-1}-a_p^{-1})}\,dx\bigg]_{i,j=1}^{p-1} \nonumber \\
=&\det\bigg[\int_{\lambda^{(p)}_1}^{\lambda_{i+1}^{(p)}}e^{-x(a_{j}^{-1}-a_p^{-1})}\,dx\bigg]_{i,j=1}^{p-1}
\end{align}
with integration domain on the right-hand side on the first line given by~\eqref{domain1}.
The first equality in~\eqref{proof-det} follows by shifting the integration inside the determinant,
while the second equality follows by a standard row manipulation.
Performing the integral within the determinant on the last line~\eqref{proof-det}, we see that
\begin{align}
\det\bigg[\int_{\lambda^{(p)}_1}^{\lambda_{i+1}^{(p)}}e^{-x(a_{j}^{-1}-a_p^{-1})}\,dx\bigg]_{i,j=1}^{p-1}
&=a_p^{p-1}\prod_{j=1}^{p-1}\frac{a_j}{a_j-a_p}
\det\big[e^{-\lambda_{i+1}^{(p)}(a_{j}^{-1}-a_p^{-1})}-e^{-\lambda_{1}^{(p)}(a_{j}^{-1}-a_p^{-1})}\big]_{i,j=1}^{p-1} \nonumber \\
&=a_p^{p-1}\prod_{j=1}^{p-1}\frac{a_j}{a_j-a_p}\prod_{\ell=1}^p e^{\lambda_\ell^{(p)}/a_p}
\det\big[e^{-\lambda_{i}^{(p)}/a_{j}}\big]_{i,j=1}^{p},
\end{align}
where the last inequality can be understood by applying elementary row operations of adding multiples of the first row to the rows below so as to get zero entries in the final column, expect for the first entry, then Laplace expanding by that entry.

Finally using this evaluation of the integral from~\eqref{proof-recursion} verifies the recursion for $p\leq n_0$. The verification for $p>n_0$ follows the same lines.

\paragraph{iii)}
It only remains to show that $P^{n_0}_{p=1}(a_1;\lambda_1^{(1)})$ is given by~\eqref{as1} with $n=1$. There are two cases $n_0=0$ and $n_0=1$, but they are both immediate since we are considering scalars.

\subsection{Second proof: Limit of inverse Laplace transform expression}

This second proof of our main theorem starts by looking at the eigenvalues of a more general matrix
\begin{equation}\label{GAG+B}
G^\dagger AG+B,
\end{equation}
where $G$ be an $n\times N$ complex Gaussian random matrix, while $A$ and $B$ are Hermitian matrices.
Due to unitary invariance we can in fact choose $A$ and $B$ to be diagonal,
say $A=\diag(a_1,\ldots,a_n)$ and $B=\diag(b_1,\ldots,b_N)$,
without loss of generality.

It is evident that this eigenvalue problem reduces to that of Theorem~\ref{T1}, when
\begin{equation}\label{s1b}
b_1,\dots, b_N \to 0,
\end{equation}
This observation is crucial, since it was shown by Forrester and Rains~\cite[Thm. 6]{FR05} that
(assuming the eigenvalues of $A$ and $B$ are pairwise distinct)
the eigenvalue PDF for the matrix~\eqref{GAG+B} can be written as
\begin{equation}\label{s1}
\ePDF(G^\dagger AG+B)=
\frac{1}{N!}\frac{\det[x_i^{j-1}]_{i,j=1}^N}{\det[b_i^{j-1}]_{i,j=1}^N}
\det \Big [ \mathcal L^{-1}[\det(\mathbb I + A s)^{-1}](x_i - b_j)
\Big ]_{i,j=1}^N.
\end{equation}
where $\ePDF(M)$ is a short-hand notation for the eigenvalue density for a random matrix $M$ and $\mathcal L^{-1}$ denotes  the inverse two-sided Laplace transform,
\begin{equation}\label{s2}
\mathcal L^{-1}[f(s)](x):=\lim_{\tau\to0^+}\int_{-i\infty}^{+i\infty}\frac{ds}{2\pi i}\,e^{sx+\tau s^2/2}f(s).
\end{equation}
In some sense, this result is more general than the one we are trying to prove, but it is also far less explicit
and therefore less useful for our purposes.
Thus, the strategy to prove Theorem~\ref{T1} presented in this subsection is
to show that given~\eqref{as} then~\eqref{s1} reduces to \eqref{as1} in the limit~\eqref{s1b}. Here~\eqref{as1} refers to the PDF
with the $N$ non-zero eigenvalues. However, we know from Remark~\ref{R1} that the $n=N$ case of~\eqref{as1}
implies the general $n$ case. It is therefore sufficient for us to set $n=N$ in (\ref{s1}) and to show
that in the limit (\ref{s1b}) the case $n=N$ of (\ref{as1}) appears.

For this purpose, we begin by noting that with $f(s) = \det(\mathbb I + A s)^{-1}$, the limit $\tau \to 0^+$
in (\ref{s2}) can be taken inside the integral, telling us that
\begin{equation}
\mathcal L^{-1}[\det(\mathbb I+As)^{-1}](x_i-b_j)
=\int_{-i\infty}^{+i\infty}\frac{ds}{2\pi i}\,\frac{e^{s(x_i-b_j)}}{\prod_{k=1}^N(1+a_ks)}.
\end{equation}
Our evaluation of this contour integral follows the standard procedure of first closing the contour and then using the residue theorem. Due to the inequalities~\eqref{as2}, we see that the contour must be closed in the positive half-plane for $i=1,\ldots,n_0$, which according to the inequalities~\eqref{as1} picks up contributions from $n_0$ simple poles located at $-1/a_1,\ldots,-1/a_{n_0}$. For $i=n_0+1,\ldots,N$ the contour must be closed in the negative half-plane picking up contributions from the remaining $N-n_0$ poles located at $-1/a_{n_0+1},\ldots,-1/a_{N}$.
After substituting this straightforward evaluation of the contour integral into the last determinant in (\ref{s1}), the limit~\eqref{s1b} may be found by successive use of L'H\^opital's rule. This yields
\begin{equation}\label{s1ea}
\ePDF(G^\dagger AG)=
\prod_{k=0}^N\frac1{k!}\det [ x_i^{j-1} ]_{i,j=1}^N
\!\!\!\!\!\!\!\!\sum_{\substack{1\leq k_1,\dots,k_{n_0}\leq n_0\\n_0+1\leq k_{n_0+1},\dots,k_N\leq N}}\!\!\!\!\!\!\!\!
\det\bigg[\frac1{|a_{k_i}|}\frac{a_{k_i}^{N-j}e^{-x_i/a_{k_i}}}{\prod_{l=1,\,l\neq k}^N(a_{k_i}-a_l)}\bigg]_{i,j=1}^N.
\end{equation}
The latter determinant in this expression may be simplified considerably by noting that the only factor inside determinant which depends on both index $i$ (through $k_i$) and index $j$ is $a_{k_i}^{N-j}$, while the only factor depending on both index $i$ and index $k_i$ is $e^{-x_i/a_{k_i}}$. Thus, upon expansion and reordering of products we see that
\begin{equation}
\det\bigg[\frac1{|a_{k_i}|}\frac{a_{k_i}^{N-j}e^{-x_i/a_{k_i}}}{\prod_{l=1,\,l\neq k_i}^N(a_{k_i}-a_l)}\bigg]_{i,j=1}^N
=\prod_{i=1}^N\frac{e^{-x_i/a_{k_i}}}{|a_i|}\prod_{1\leq i<j\leq N}\frac1{(a_j-a_i)^2}\det\big[a_{k_i}^{j-1}\big]_{i,j=1}^N.
\end{equation}
Thus, the eigenvalue PDF becomes
\begin{multline}\label{long-expression-2}
\ePDF(G^\dagger AG)=\prod_{k=0}^N\frac1{k!}\det [ x_i^{j-1} ]_{i,j=1}^N
\prod_{i=1}^N\frac{1}{|a_i|}\prod_{1\leq i<j\leq N}\frac1{(a_j-a_i)^2}\\
\times\sum_{\substack{1\leq k_1,\dots,k_{n_0}\leq n_0\\n_0+1\leq k_{n_0+1},\dots,k_N\leq N}}
\prod_{i=1}^Ne^{-x_i/a_{k_i}}\,\det\big[a_{k_i}^{j-1}\big]_{i,j=1}^N.
\end{multline}
In order to evaluate the sums on the second line in~\eqref{long-expression-2}, we note that the indices satisfy $k_1,\dots,k_{n_0}<k_{n_0+1},\dots,k_N$ and that the determinant is anti-symmetric in both $\{k_1,\ldots,k_{n_0}\}$ and $\{k_{n_0+1},\ldots,k_N\}$. This allow us to write
\begin{equation}\label{triple-vandermonde}
\sum_{\substack{1\leq k_1,\dots,k_{n_0}\leq n_0\\n_0+1\leq k_{n_0+1},\dots,k_N\leq N}}\!\!\!\!
\prod_{i=1}^Ne^{-x_i/a_{k_i}}\,\det\big[a_{k_i}^{j-1}\big]_{i,j=1}^N=
\det\big[a_{i}^{j-1}\big]_{i,j=1}^N\det[e^{-x_i/a_j}]_{i,j=1}^{n_0}\det[e^{-x_i/a_j}]_{i,j=n_0+1}^{N}.
\end{equation}
The first determinant on the right-hand side in~\eqref{triple-vandermonde} and the first determinant in~\eqref{long-expression-2} are both Vandermonde determinants, so the eigenvalue PDF~\eqref{long-expression-2} becomes
\begin{equation}
\ePDF(G^\dagger AG)=\prod_{k=0}^N\frac1{k!}\prod_{i=1}^N\frac{1}{|a_i|}
\prod_{1\leq i<j\leq N}\frac{x_j-x_i}{a_j-a_i}
\det[e^{-x_i/a_j}]_{i,j=1}^{n_0}\det[e^{-x_i/a_j}]_{i,j=n_0+1}^{N},
\end{equation}
which we recognise as the desired statement~\eqref{as} with $n=N$.

\subsection{Third proof: Matrix integral over the pseudo-unitary group}
\label{sec:fyodorov}

For this third proof of Theorem~\ref{T1} we will again restrict our attention to the case $n=N$.
The Gaussian matrix $G$ specified in Theorem \ref{T1} has distribution
\begin{equation}\label{Gp}
\Big(\frac1{\pi}\Big)^{N^2}e^{-\tr G^\dagger G}(dG),
\end{equation}
where $(dG)$ is the Lebesgue measure on the space of complex $N\times N$ matrices.
It is a standard result from random matrix theory that the positive semi-definite Hermitian matrix $\tilde{W}= G G^\dagger$ is distributed according to
\begin{equation}\label{Wt}
\prod_{k=0}^{N-1}\frac1{\pi^kk!}\,e^{-\tr\tilde{W}}(d\tilde W),
\end{equation}
where $(d\tilde W)$ is the Lebesgue measure on the space of Hermitian matrices subject to the constraint that $\tilde W$ is positive semi-definite. This may be seen by decomposing the matrix $G$ using a polar decomposition, i.e. $G=U\tilde W^{1/2}$ with $U$ unitary. Making this change of variables and integrating over the unitary degrees of freedom contribute an extra factor, $2^{-N} {\rm vol} \, U(N)$, to the normalisation;
see e.g.~\cite{Fo10}.

The proof presented in this section is based on an integration formula for the pseudo-unitary group with a pseudo-metric tensor $\eta$ determined according to the number of positive (negative) eigenvalues of the matrix $A$.
More precisely, with $A$ as specified in Theorem \ref{T1}, we define
\begin{equation}\label{S}
A_+=\diag(|a_1|,\ldots,|a_N|) \qquad\text{and}\qquad
\eta=\eta^N_{n_0}=\diag(\,\underbrace{-1,\ldots,-1}_{n_0},\underbrace{+1,\ldots,+1}_{N-n_0}\,),
\end{equation}
such that $A=A_+\eta=\eta A_+$. In fact, we have the more general relation $A=A_+^p\eta A_+^q$ with $p+q=1$, since the matrix $A$ is assumed to be non-singular.

Next, introduce the matrix $Z= A_+^{1/2}\tilde{W}A_+^{1/2}$ where $\tilde W$ is an $N\times N$ matrix distributed according to~\eqref{Wt}.
Since both $Z$ and $\tilde{W}$ are complex Hermitian random matrices, we have $(dZ)=(\det A_+)^N(d \tilde{W})$,
see e.g.~\cite[Eq.~(1.35)]{Fo10}.
Thus, using that $A_+$ is invertible we know from~\eqref{Wt} that the distribution of $Z$ is equal to
 \begin{equation}\label{Z1}
\prod_{k=0}^{N-1}\frac1{\pi^kk!}\,\frac{e^{-\tr A_+^{-1} Z}}{(\det A_+)^{N}}(d Z).
 \end{equation}
In terms of $Z$ we may define $\tilde{Z}=\eta Z=\eta A_+^{1/2}GG^\dagger A_+^{1/2}$. This matrix is important, since its eigenvalues are identical to those of
\begin{equation}\label{Z=GAG}
G^\dagger A_+^{1/2}\eta A_+^{1/2}G=G^\dagger AG.
\end{equation}
Here, the right-hand side is the matrix of interest for Theorem~\ref{T1} and the equality is a simple consequence of the definition~\eqref{S}.
Moreover, since $(d \tilde{Z}) = (d Z)$ (the action of $\eta$ on $Z$ is only to change the sign of some of
 the entries of $Z$), we read off from (\ref{Z1}) that the distribution of $\tilde{Z}$ is equal to
 \begin{equation}\label{Z2}
\prod_{k=0}^{N-1}\frac1{\pi^kk!}\,\frac{e^{-\tr A^{-1}\tilde{Z}}}{(\det A_+)^{N}}(d\tilde{Z}),
 \end{equation}
where it is further required that $\eta\tilde{Z}$ is positive semi-definite.

It is a known result that if $Z$ is a positive definite matrix, then the matrix $\eta Z$ has exactly $n_0$ negative eigenvalues and $N-n_0$ positive eigenvalues~\cite{PS82}.
Furthermore, the matrix $\tilde{Z}$ can be diagonalised using a pseudo-unitary similarity transformation, i.e. there exists a matrix $V\in U(\eta)$ such that
 \begin{equation}\label{Z2a}
\tilde{Z} = V L V^{-1},
 \end{equation}
where $L=\diag(x_1,\dots, x_N)$ is a real diagonal matrix. We recall that the pseudo-unitary group is defined as
\[
U(\eta)=\{V\in GL(N,\mathbb C)\,\vert\,V^\dagger \eta V=V\eta V^\dagger=\eta\}.
\]
The Jacobian associated to the change of variables~\eqref{Z2a} is~\cite{PS82}
\begin{equation}\label{Z2b}
(d\tilde{Z})=(V^{-1} d V)\prod_{1\leq i<j\leq N}(x_j-x_i)^2\prod_{l=1}^N dx_l,
\end{equation}
where $(V^{-1}dV)$ is the Haar measure on the pseudo-unitary group.
For the measures (\ref{Z2b}) to be in one-to-one correspondence, it is necessary to restrict the overall phase of each eigenvector, or equivalently require that $V \in U(\eta)/U(1)^N$.

Substituting~\eqref{Z2b} into (\ref{Z2}) shows that the eigenvalue PDF of $\tilde{Z}$
(or equivalently of $G^\dagger A G$) is equal to
 \begin{equation}\label{Z3}
 P_{n_0, n}( \{ a_j \}_{j=1}^N; \{ x_j \}_{j=1}^N ) =
 \prod_{k=0}^{N-1}\frac1{\pi^kk!} \prod_{1\leq i<j\leq N}(x_j-x_i)^2
 \int e^{-\tr A^{-1} V L V^{-1} }  \,\frac{(V^{-1} d V)}{\det(A_+)^{N}}.
 \end{equation}
We note that for $n_0=0$ and $n_0=N$, the signature $\eta$ becomes proportional to the identity and the group integral in~\eqref{Z3} reduces to the well-known HCIZ integral~\eqref{HCIZ}.
The generalisation of the HCIZ integral from an integral over the unitary group to an integral over the pseudo-unitary group (i.e. $0<n_0<N$) has been studied by Fyodorov~\cite{Fy02,FS02}, who showed that
 \begin{equation}\label{Z4}
 \int_{U(\eta)/U(1)^N}e^{-\tr AVBV^{-1}}\,(V^{-1}dV)= K_{N,n_0}
 \frac{\det[e^{-a_ib_j}]_{i,j=1}^{n_0}\det[e^{-a_{i+n_0}b_{j+n_0}}]_{i,j=1}^{N-n_0}}
 {\prod_{1\leq i<j\leq N} (a_j-a_i)(b_j-b_i)}
 \end{equation}
with $K_{N,n_0}$ denoting an undetermined proportionality constant, and $A=\diag(a_1,\ldots,a_N)$ and $B=\diag(b_1,\ldots,b_N)$ denoting diagonal matrices subject to the constraints
\begin{align}
a_1<\cdots<a_{n_0}&<0<a_{n_0+1}<\cdots<a_N, \nonumber \\
b_1<\cdots<b_{n_0}&<0<b_{n_0+1}<\cdots<b_N. \label{constraints}
\end{align}
The constraints~\eqref{constraints} must be included to ensure convergence of the group integral on the right-hand side in~\eqref{Z4} for an $0<n_0<N$. This is necessary since the pseudo-unitary group is non-compact except for $n_0=0$ or $n_0=N$ in which case the aforementioned constraints may be ignored.

Now, using the integration formula~\eqref{Z4} to evaluate the group integral in~\eqref{Z3}, we see that
 \begin{multline}
 P_{n_0, n}( \{ a_j \}_{j=1}^N; \{ x_j \}_{j=1}^N ) =
 K_{N,n_0}\prod_{k=0}^{N-1}\frac1{\pi^kk!\,|a_{k+1}|}
 \prod_{1\leq i<j\leq N}\frac{x_j-x_i}{a_j-a_i}\\
 \times \det[e^{-x_i/a_j}]_{i,j=1 }^{n_0}\det[e^{-x_{i+n_0}/a_{j+n_0}}]_{i,j=1}^{N-n_0},
 \end{multline}
which agrees with the $n=N$ case of~\eqref{as1} provided that $K_{N,n_0}=\pi^{N(N-1)/2}$.
We note that the proportionality constant is independent of $n_0$. Moreover, the cases $n_0=0$ and $n_0=N$ (where the group integral is over $U(N)/U(1)^N$) are consistent with the known proportionality constant from the HCIZ integral; recall that our choice of measure is not normalised to unity rather we have ${\rm vol}\,U(N)/U(1)^N=\pi^{N(N-1)/2}/\prod_{j=1}^N\Gamma(j)$.


\begin{thebibliography}{99}

\bibitem{ABF11}
G. Akemann, J. Baik, and P. Di Francesco (eds.),
\emph{The Oxford handbook of random matrix theory.}
Oxford University Press, 2011.

\bibitem{ACLS}
G. Akemann,  T. Checinski, D.-Z. Liu and E. Strahov,
\emph{Finite rank perturbations in products of coupled random matrices: From one correlated to two Wishart ensembles.}
To appear in Ann. Inst. H. Poincar\'{e} Probab. Stat. Preprint arXiv:1704.05224.

\bibitem{AI15}
G. Akemann, and J. R. Ipsen,
\emph{Recent exact and asymptotic results for products of independent random matrices.}
Acta Physica Polonica B 46 (2015) 1747.

\bibitem{AIK13}
G. Akemann, J. R. Ipsen, and M. Kieburg,
\emph{Products of rectangular random matrices: singular values and progressive scattering.}
Phys. Rev. E \textbf{88} (2013) 052118.

\bibitem{AKW13}
G. Akemann, M. Kieburg, and L. Wei,
\emph{Singular value correlation functions for products of Wishart random matrices.}
J. Phys. A \textbf{46} (2013) 275205.

\bibitem{AS16}
G. Akemann and E. Strahov,
\emph{Dropping the independence: singular values for products of two coupled random matrices.}
Commun. Math. Phys. 345 (2016) 101.

\bibitem{AGT10}
N. Alexeev, F. G\"otze, and A. Tikhomirov,
\emph{Asymptotic distribution of singular values of powers of random matrices.}
Lith. Math. J. 50 (2010) 121.

\bibitem{An83}
C. Andr\'eief,
\emph{Note sur une relation les int\'egrales d\'efinies des produits des fonctions.}
M\'em. de la Soc. Sci. Bordeaux 2 (1883).


\bibitem{bai07} Z. D. Bai, B. Miao,  and B. Jin,
On limit theorem for the eigenvalues of product of two random matrices. J. Mult. Anal. 98 (2007) 76.

\bibitem{BBCC11}
T. Banica, S. T. Belinschi, M. Capitaine, and B. Collins,
\emph{Free Bessel laws.}
Canad. J. Math, \textbf{63} (2011) 3.




\bibitem{Bo98}
A. Borodin,
\emph{Biorthogonal ensembles.}
Nucl. Phys. B \textbf{536} (1998) 704.




\bibitem{Br55}
N. G. de Bruijn,
\emph{On some multiple integrals involving determinants.}
J. Indian Math. Soc 19 (1955) 133.

\bibitem{BJLNS10}
Z. Burda, A. Jarosz, G. Livan, M. A. Nowak, and A. Swiech,
\emph{Eigenvalues and singular values of products of rectangular Gaussian random matrices.}
Phys. Rev. E \textbf{82} (2010) 061114.

\bibitem{Ca68}
L. Carlitz,
\emph{A note on certain biorthogonal polynomials.}
Pacific J. Math. 24 (1968) 425.

\bibitem{CKW15}
T. Claeys, A. B. J. Kuijlaars, and D. Wang,
\emph{Correlation kernels for sums and products of random matrices.}
Random Matrices: Theory Appl. 4 (2015) 1550017.

\bibitem{CR14}
T. Claeys, and S. Romano,
\emph{Biorthogonal ensembles with two-particle interactions.}
Nonlinearity 27 (2014) 2419.

\bibitem{Co39}
A. R. Collar,
\emph{On the reciprocation of certain matrices.}
Proc. Roy. Soc. Edinburgh. 59 (1939) 195.


\bibitem{DF06}
P. Desrosiers, P.J. Forrester, \emph{Asymptotic correlations for Gaussian and Wishart matrices with external source.}
Int. Math. Res. Notices (2006) 27395.



\bibitem{Fi72}
J. L. Fields,
\emph{The asymptotic expansion of the Meijer G-function.}
Math. Comp. (1972) 757.

\bibitem{Fo10}
P. J. Forrester,
\emph{Log-gases and random matrices}.
Princeton University Press, Princeton, NJ, 2010.

\bibitem{Fo14}
P. J. Forrester,
\emph{Eigenvalue statistics for for product complex Wishart matrices.}
J. Phys. A 47 (2014) 345202.

\bibitem{FI16}
P. J. Forrester, and J. R. Ipsen,
\emph{Selberg integral theory and Muttalib--Borodin ensembles}.
arXiv:1612.06517.


\bibitem{FL15}
P. J. Forrester, and D.-Z. Liu,
\emph{Raney distributions and random matrix theory.}
J. Stat. Phys. 158 (2015) 1051.


\bibitem{FL16}
P. J. Forrester, and D.-Z. Liu,
\emph{Singular values for products of complex Ginibre matrices with a source: hard edge limit and phase transition}. Commun. Math. Phys. \textbf{344} (2016), 333.



\bibitem{FLZ15}
P. J. Forrester, D.-Z. Liu, and P. Zinn-Justin,
\emph{Equilibrium problems for Raney densities.}
Nonlinearity \textbf{28} (2015) 2265.




\bibitem{FR05}
P. J. Forrester, and E. M. Rains,
\emph{Interpretations of some parameter dependent generalizations of classical matrix ensembles.}
Probab. Theory Relat. Fields \textbf{131} (2005) 1.

\bibitem{FW15}
P. J. Forrester, and D. Wang,
\emph{Muttalib--Borodin ensembles in random matrix theory---realisations and correlation functions.}
Electron. J. Prob. 22, paper no.~54 (2017) [43pp].

\bibitem{Fy02}
Y. V. Fyodorov,
\emph{Negative moments of characteristic polynomials of random matrices:
Ingham--Siegel integral as an alternative to Hubbard--Stratonovich transformation.}
Nucl. Phys. B 621 (2002) 643.

\bibitem{FS02}
Y. V. Fyodorov, and E. Strahov,
\emph{Characteristic polynomials of random Hermitian matrices and Duistermaat--Heckman localisation on non-compact K\"ahler manifolds}.
Nucl. Phys. B 630 (2002) 453.

\bibitem{HC57}
Harish-Chandra,
\emph{Differential operators on a semisimple Lie algebra.}
Amer. J. Math. (1957) 87.

\bibitem{Ip15}
J. R. Ipsen,
\emph{Products of independent Gaussian random matrices}.
PhD thesis, Bielefeld University 2015 (arXiv:1510.06128)

\bibitem{IK14}
J. R. Ipsen, and M. Kieburg,
\emph{Weak commutation relations and eigenvalue statistics for products of rectangular random matrices}.
Phys. Rev. E 89 (2014) 032106.

\bibitem{IZ80}
C. Itzykson, and J.-B. Zuber,
\emph{The planar approximation. II.}
J. Math. Phys. 21 (1980) 411.

\bibitem{KK16}
M. Kieburg, and H. K\"osters,
\emph{Exact Relation between Singular Value and Eigenvalue Statistics}.
Random Matrices: Theor. Appl. \textbf{5} (2016) 1650015.

\bibitem{KKS15}
M. Kieburg, A. B. J. Kuijlaars, and D. Stivigny,
\emph{Singular value statistics of matrix products with truncated unitary matrices.}
Int. Math. Res. Not. (2016) 3392.

\bibitem{KVZ13}
M. Kieburg, J. J. M. Verbaarschot, and S. Zafeiropolous,
\emph{Spectral properties of the Wilson-Dirac operator and random matrix theory},
Phys. Rev. D 88 (2013) 094502.

\bibitem{Ko67}
J. D. E. Konhauser,
\emph{Biorthogonal polynomials suggested by the Laguerre polynomials}.
Pacific J. Math. 21 (1967) 303.

\bibitem{Ku16}
A. B. J. Kuijlaars,
\emph{Transformations of polynomial ensembles.}
In ``\emph{Modern Trends in Constructive Function Theory}'' Amer. Math. Soc. (2016) 253.




\bibitem{KR16}
A. B. J. Kuijlaars and P. Roman,
\emph{Spherical functions approach to sums of random Hermitian matrices.}
arXiv:1611.08932.

\bibitem{KS14}
A. B. J. Kuijlaars, and D. Stivigny,
\emph{Singular values of products of random matrices and polynomial ensembles.}
Random Matrices: Theor. Appl. \textbf{3} (2014) 1450011.

\bibitem{KZ14}
A. B. J. Kuijlaars, and L. Zhang,
\emph{Singular values of products of Ginibre random matrices, multiple orthogonal polynomials and hard edge scaling limits.}
Commun. Math. Phys. \textbf{332} (2014) 759.


\bibitem{Ku15}
S. Kumar,
\emph{Random matrix ensembles involving Gaussian Wigner and Wishart matrices, and biorthogonal structure.}
Phys. Rev. E \textbf{92} (2015) 032903.


\bibitem{Liu17}
D.-Z. Liu,  \emph{Singular values for products of two coupled random matrices: hard edge phase transition.}
Constr Approx (2017),  https://doi.org/10.1007/s00365-017-9389-z, 42 pages.


\bibitem{LWZ14}
D.-Z. Liu, D. Wang, and L. Zhang,
\emph{Bulk and soft-edge universality for singular values of products of Ginibre random matrices.}  
Ann. Inst. H. Poincar\'{e} Probab. Stat. \textbf{52} (2016) 1734.

\bibitem{MSH09}
A. M. Mathai, R. K. Saxena, and H. J. Haubold
\emph{The H-function: theory and applications}.
Springer Science \& Business Media, 2009.

\bibitem{Mu02}
R. R. M\"uller,
\emph{On the asymptotic eigenvalue distribution of concatenated vector-valued fading channels.}
IEEE Trans. Inf. Theor. \textbf{48} (2002) 2086.

\bibitem{Mu95}
K. A. Muttalib,
\emph{Random matrix models with additional interactions.}
J. Phys. A \textbf{28} (1995) L159.


\bibitem{NS06}
A. Nica, and R. Speicher,
\emph{Lectures on the combinatorics of free probability}.
Cambridge University Press, 2006.


 \bibitem{NIST}
F.W.J. Olver,  D.W. Lozier,    R.F. Boisvert,   C.W. Clark (eds.), \emph{NIST Handbook of Mathematical Functions}.
Cambridge University Press, Cambridge,   2010(Print companion to [DLMF]).

\bibitem{PSbook}
L. A. Pastur, and M. Shcherbina,
\emph{Eigenvalue distribution of large random matrices.}
Amer. Math. Soc., 2011.

\bibitem{PZ11}
K. A. Penson, and K. Zyczkowski,
\emph{Product of Ginibre matrices: Fuss--Catalan and Raney distributions.}
Phys. Rev. E \textbf{83} (2011) 061118.

\bibitem{PS82}
A. M. M. Pruisken, and Sch\"afer,
\emph{The Anderson model for electron localisation non-linear sigma model, asymptotic gauge invariance.}
Nucl. Phys. B 200 (1982) 22.

\bibitem{Si08}
B. Simon,
\emph{The Christoffel--Darboux kernel.}
Proc. Sympos. Pure Math. 79 (2008) 295.

\bibitem{Te88a}
A.~Terras,
\emph{Harmonic analysis on symmetric spaces and applications},
Vol. 2, Springer-Verlag, Birlin, 1988.




\bibitem{Wi57}
E. Wigner,
\emph{Statistical Properties of real symmetric matrices with many dimensions}.
Can. Math. Congr. Proc., University of Toronto Press (1957) 174.

\bibitem{Zh15}
L. Zhang,
\emph{Local universality in biorthogonal Laguerre ensembles.}
J. Stat. Phys. \textbf{161} (2015) 688.

\end{thebibliography}
\end{document}